\theoremstyle{plain}
\newtheorem{theorem}{Theorem}
\newtheorem{lemma}{Lemma}
\theoremstyle{definition}
\newcommand{\sizeof}[1]{\left\lvert{#1}\right\rvert}
\newcommand{\Na}{N_{\alpha}}
\newcommand{\caze}[2]{\textbf{Case {#1}:} \textit{#2}}
\newcommand{\st}{\colon\,}
\newcommand{\A}{{\mathcal A}}
\newcommand{\B}{{\mathcal B}}
\newcommand{\C}{{\mathcal C}}
\newcommand{\D}{{\mathcal D}}
\renewcommand{\O}{{\mathcal O}}
\newcommand{\nin}{\noindent}
\newcommand{\tvG}{{\theta_{\circ}(G)}}
\newcommand{\te}{{\theta_{|}}}
\newcommand{\teG}{{\theta_{|}(G)}}
\newcommand{\tet}{{\theta_{|,\triangle}}}
\newcommand{\tetG}{{\theta_{|,\triangle}(G)}}
\newcommand{\tuvw}{{\theta_{|,\triangle}^{\{u,v,w\}}}}
\newcommand{\turan}{Tur\'{a}n}
\begin{document}
\title{Motif Clustering and Overlapping Clustering for Social Network Analysis\footnotemark[1]}
\author{ Pan~Li, Hoang~Dau, Gregory~Puleo and Olgica~Milenkovic\footnotemark[2] }
\footnotetext[1]{A shorter version of this will appear in Proc. 2017 IEEE Conference on Computer Communications (INFOCOM). The research is supported in part by the NSF grant CCF-1029030 and the NSF STC for Science of Information.} 
   \footnotetext[2]{The authors are with the Coordinated Science Laboratory, Department of Electrical and Computer Engineering, University of Illinois at Urbana-Champaign (email: panli2@illinois.edu, hoangdau@illinois.edu, gjp0007@illinois.edu, milenkov@illinois.edu).
   }
\maketitle
\begin{abstract}
Motivated by applications in social network community analysis, we introduce a new clustering paradigm termed~\emph{motif clustering}. Unlike classical clustering, motif clustering aims to minimize the number of clustering errors associated with both edges and certain higher order graph structures (motifs) that represent ``at
omic units'' of social organizations. 
Our contributions are two-fold: We first introduce \emph{motif correlation clustering}, in which the goal is to \emph{agnostically} partition the vertices of a weighted complete graph so that certain predetermined ``important'' social subgraphs mostly lie within the same cluster, while ``less relevant'' social subgraphs are allowed to lie across clusters. We then proceed to introduce the notion of \emph{motif covers}, in which the goal is to cover the vertices of motifs via the smallest number of (near) cliques in the graph. Motif cover algorithms provide a natural solution for overlapping clustering and they also play an important role in latent feature inference of networks. For both motif correlation clustering and its extension introduced via the covering problem, we provide hardness results, algorithmic solutions and community detection results for two well-studied social networks.
\end{abstract}
\section{Introduction} \label{sec:intro}

The problem of clustering vertices of graphs has received significant attention in physics, biology and 
computer science due to the fact that it reveals important properties regarding the community structure of the underlying networks~\cite{jain1988algorithms,benson2016higher}. Clustering may result in a partition of the vertices, or a decomposition of the vertex set into intersecting subsets that are often referred to as overlapping communities~\cite{barthelemy2001np}. 
In most machine learning settings, one focuses on spectral clustering methods~\cite{von2007tutorial} and assumes that the number of clusters 
or an upper bound on the number of clusters is known beforehand, or that the parameters of the model may be learned efficiently~\cite{pelleg2000x,K-means}. 
On the other hand, some clustering methods proposed in the computer science literature~\cite{BBC1} adopt agnostic approaches that often result in computationally hard problems that may only be solved approximately~\cite{DEFI1}. The algorithms used to perform clustering 
range from greedy and iterative methods to semidefinite and linear programs accompanied by rounding techniques~\cite{ACN2,CGW1}, and may be implemented in parallel~\cite{pan2015parallel}. 

One important, yet highly overlooked aspect of community detection is that in order to capture relevant social phenomena, one has to understand higher order interactions of entities in the community. These higher order interactions correspond to induced subgraphs of the social networks, and as such, should be considered as ``atomic units'' of the graph. Clearly, edges represent one such unit, as they capture pairwise interactions, but almost equally important entities are triangles, which are known to be social and biological network motifs (i.e., subgraphs that appear with frequency exceeding the one predicted through certain random models). Hence, when clustering vertices in a graph it may be important to place a motif such as a triangle within the same cluster, rather than between clusters. Related problems have been studied in different contexts and with different motivations under the name of \emph{hypergraph clustering} in a fairly limited number of contributions~\cite{benson2016higher, zhou2006learning,leordeanu2012efficient,agarwal2005beyond,kim2011higher,angelini2015spectral}. Almost all of the methods proposed for this particular setting are heuristics that are constrained by knowledge of the problem parameters. Furthermore, the methods appear hard to interpret in one unified framework that involves both nonoverlapping and overlapping clusters, and tend to use spectral techniques which often do not come with general analytical guarantees. None of the methods treats hyperedges of different sizes as having different relevance, as the hyperedges are usually not seen as entities that arise from subgraphs of a social graph. In addition, none of the hypergraph clustering methods extends to overlapping clustering.

Here, we take a very general and broad new approach to hypergraph clustering by building on the ideas behind classical \emph{correlation clustering}~\cite{BBC1}, which may be succinctly described as follows: One is given a graph and, for some pairs of vertices, one is also given a \emph{quantitative assessment} of whether the objects are \emph{similar} or \emph{dissimilar}. The goal is to partition the vertices of the graph so that similar vertices tend to aggregate within clusters and dissimilar vertices tend to belong to different clusters. Instead of looking at the problem of clustering individual vertices, we focus our attention on simultaneously clustering subgroups of vertices forming specific, prescribed subgraphs  in the graph. We impose weights on the cost of subgraph clustering, which allow one to assess the penalty of placing the subgraph across clusters or within one cluster, thereby taking structural relevance into account. Based on ideas behind an overlapping correlation clustering technique suggested in~\cite{BonchiGionisUkkonenICDM2011}, we also develop motif correlation clustering techniques for overlapping community detection. In this setting, the goal is to cover all motifs by the smallest number of cliques or near cliques in the graph. Our interpretation also gives rise to a new direction in the field of intersection graph theory~\cite{erdos1966representation} and may be used for latent feature inference~\cite{TsourakakisWWW2015}. For succinctness, we mostly focus our attention on two types of motifs only, edges and triangles. The results described for edges and triangles may be extended to account for higher order structures. 

The paper is organized as follows. In Section~\ref{sec:overview}, we describe correlation clustering and overlapping correlation clustering. Section~\ref{sec:mcc} introduces our new motif correlation clustering paradigm. There, we show that the problem of interest is NP-complete and describe a constant approximation algorithm for clustering based on a linear programming (LP) relaxation followed by rounding. We then proceed to introduce the overlapping motif correlation clustering problem in Section~\ref{sec:omcc}, prove that it is NP-complete and provide some theoretical results on the largest number of clusters needed for the coverings. We also introduce a heuristic simulated annealing algorithm for overlapping clustering that performs well in practice and generalizes the work in~\cite{TsourakakisWWW2015}. We conclude with Section~\ref{sec:simulations}, which contains simulation results for two networks with ground-truth community structures, illustrating the concepts of motif and overlapping motif clustering. 
Large scale network analysis is relegated to a companion paper.  

\section{Correlation Clustering and Overlapping Correlation Clustering} \label{sec:overview}

There are two dual formulations of the correlation clustering optimization problem: \emph{MinDisagree} and \emph{MaxAgree}. 
In both cases, one is given a graph whose vertices are to be clustered, with each edge labeled so as to indicate 
whether the endpoint vertices are to lie within the same cluster or not. For the MinDisagree version of the problem, one aims to minimize the number of erroneously placed edges (pairs of vertices), while for the dual MaxAgree version, one seeks to maximize the total number of correctly placed edges. Finding an optimal solution to either problem is NP-complete, but the MinDisagree version of the problem is harder to approximate. As from the perspective of experimental design and quality of service erroneously clustered vertices are often more costly than correctly clustered ones, a large body of work has focused on the MinDisagree version of the problem~\cite{BBC1}. Unfortunately, the MinDisagree problem remains hard even when the input graph is
complete~\cite{bansal2004correlation}. For complete graphs, several constant approximation randomized~\cite{ACN2} and deterministic~\cite{near-optimal} algorithms are known. When the graph is allowed to be arbitrary, the best known approximation ratio is $O(\log n)$~\cite{DEFI1}.

Some variants of correlation clustering allow for including fractional edge weights into the problem formulation, with each edge endowed with a ``similarity'' and ``dissimilarity'' weight: If the edge is placed across clusters, the edge is charged its similarity weight, and if the edge is placed within the same cluster, the edge is charged its dissimilarity weight. The MinDisagree clustering goal is to minimize the overall vertex partitioning weight (cost). Clearly, if the weights are unrestricted, not all instances of the weighted clustering problem may be efficiently approximated. Hence, most of the work has focused on so-called 
\emph{probability weights}~\cite{BBC1}. The classical probability weights correlation clustering problem formulation for a weighted graph $G=(V(G),E(G))$ may be written as:
  \[\begin{aligned}
    & \underset{x}{\text{minimize}}
    & & \left[\sum_{e \in E(G)}(w_e x_e + (1-w_e) (1-x_e))\right] \\
    & \text{subject to}
    & & x_{uv} \leq x_{uz} + x_{zv} \quad{\text{(for all distinct $u,v,z \in V(G)$)}} & \\
    &&&  x_{e} \in \{{0,1\}} \quad{\text{(for all $e \in E(G)$)}} &
      \end{aligned} \label{fig:LP}
    \]
Here, the variables $x_e$ are indexed by edges $e$ and interpreted as follows: $x_e = 1$ means that the endpoints of $e$ lie in different clusters while $x_e = 0$ means that the endpoints of $e$ lie in the same cluster. The cost of placing $e$ across clusters is $0 \leq w_e \leq 1$, while the cost of placing $e$ within the same cluster equals $1-w_e$. The triangle inequality $x_{uv} \leq x_{uz} + x_{zv}$ captures the fact that if two edges with vertices $uz$ and $zv$ are in the same cluster, then the edge with vertices $uv$ should also belong to the same cluster.

As the problem described in the former setting is hard~\cite{BBC1}, a standard approach is to relax the constraint $x_{e} \in \{{0,1\}}$ to $x_{e} \in [0,1]$, and then round the fractional values $x_e$~\cite{CGW1}.

An equivalent formulation of the correlation clustering problem, which naturally extends to an overlapping community setting, may be stated as follows~\cite{TsourakakisWWW2015, BonchiGionisUkkonenICDM2011}.

As before, one is given a graph $G=(V(G),E(G))$, $|V|=n$, and a similarity weight function $w: V \times V \to [0,1],$ as well as a sufficiently large set of labels (features) $L$. The labels will give rise to the vertex partition by grouping all vertices with the same label into one cluster. Correlation clustering reduces to finding a labeling $\ell: V \to L$ which minimizes 
        \begin{equation}
        \sum_{uv \in E(G), \, \ell(u)=\ell(v)} (1-w_{uv})+\sum_{uv \in E(G), \, \ell(u) \neq \ell(v)} w_{uv}. \notag
        \end{equation}
A simple extension of this formulation for the case of overlapping clusters is to assign a set of labels to each vertex, rather than one label only. This implies multiple cluster membership for some vertices. In this setting, let $A,B$ denote sets and let $H(A,B)$ be some chosen set similarity function. Furthermore, let $\ell$ be a set labeling function. The goal of overlapping clustering now becomes to find a labeling function $\ell: V \to \mathcal{P}(L)$, where $\mathcal{P}(L)$ denotes the power set of $L$, that minimizes
         \begin{equation}
        \sum_{uv \in E(G)} |H(\ell(u),\ell(v))-w_{uv}|. \notag
        \end{equation}
The objective function takes different forms depending on the chosen set similarity function $H(A,B)$. If $H(A,B)=1$ for $A \cap B \neq \emptyset$, and zero otherwise, overlapping correlation clustering reduces to an instance of the \emph{intersection representation problem} from graph theory~\cite{erdos1966representation}. An intersection representation of a finite, undirected graph $G=(V(G),E(G))$ is an assignment of subsets $\mathcal{I}_u$ of a finite, sufficiently large ground set $\mathcal{F}$, to vertices $u \in V$ such that $(u,v) \in E$ if and only if $\mathcal{I}_u \cap \mathcal{I}_v \neq \emptyset$. The smallest cardinality of the ground set $\mathcal{F}$ needed to properly represent the graph is known as the \emph{intersection number of the graph}. It is known that the the intersection number of a graph equals its edge clique cover number, i.e., the smallest number of cliques in the graph needed to cover all edges in the graph~\cite{ErdosGoodmanPosa1966}. It is clear that given an intersection representation 
of the graph, the set of vertices that are assigned to a particular clique may be seen as sharing one feature. This is why the intersection representation of a graph is often used for latent feature inference.

An example of an intersection representation of a graph over the smallest ground set $\mathcal{F}=\{{1,2,3\}}$ is shown in Figure~\ref{fig:C4}.    
\begin{figure}[htb]
\centering
\includegraphics[scale=1]{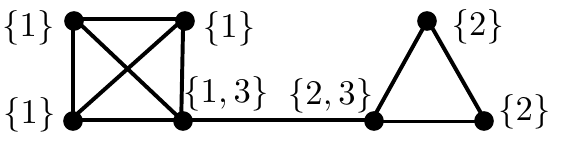}
\caption{An intersection representation of a graph using three features $\{1,2,3\}$.} 
\label{fig:C4}
\end{figure}  
\section{Motif Correlation Clustering} \label{sec:mcc}
\vspace{-0.08in}
We depart from the classical correlation clustering problem by considering a new setting in which one is allowed to assign probability weights to both edges and arbitrary small induced subgraphs in the graph and then perform the clustering so as to minimize the overall cost of both edge and higher motif placements. The described method focuses on weighted undirected and complete graphs, but despite these apparent topological limitations, it allows one to handle motifs in \emph{both directed or incomplete graphs} by encoding information about the ``relevance'' of directed or incomplete subgraphs of the graph via the assigned similarity/dissimilarity weights. For example, if in a directed graph the only motifs of interest are \emph{feedforward triangles}, only those $3$-tuples of vertices corresponding to these directed triangle structures will be assigned large similarity weights in the undirected complete graph and hence encouraged to lie within clusters. If triangles are deemed to be relevant, $3$-tuples corresponding to triangles in the original graph are assigned large similarity weight. 
Consequently, motif correlation clustering may be used in applications as diverse as layered flow analysis in a information networks, anomaly detection in communication networks or for determining hierarchical community structure detection in gene and neuronal regulatory networks~\cite{benson2016higher}. 
\subsection{Problem Formulation} \label{sec:mcc-problem}

As already remarked in the motivation section, any incomplete graph may be converted into a weighted complete graph by assigning weights to the $k$-tuples of vertices of the complete graph so as to capture the presence of both edges and non-edges and higher structural units in the initial graph. For example, a nonedge in the initial graph may be assigned a similarity weight $\leq 1/2$, thereby not (significantly) biasing the clustering objective function towards any particular solution. Similarly, edges in the initial graph may be assigned similarity weight $1$, thereby strongly forcing their corresponding vertices to cluster within the same community. 
The same logic may be applied to directed graphs as well. This is why throughout the rest of paper we assume that the graphs of interest $G(V,E)$ are \emph{undirected, weighted complete graphs} with vertex set $V$ of cardinality $n$ and edge set $E$ of cardinality $\binom{n}{2}$. 
We also use the symbol $K$ to denote an arbitrary element of $\Pi(V)$, the set of all $k$-tuples of $V$ such that $2 \leq k \leq n$.
Suppose next that to each $K \in \Pi(V)$ we assign a pair of non-negative values, $(w_K,1-w_K)$, respectively.
 The weights $w_{K}$ and $1-w_{K}$ indicate the respective costs of placing the vertices in $K$ across and within the cluster, respectively. Therefore, to enable motif clustering, the similarity weights $w_K$ of the tuples that constitute motifs in the initial graph should be large. The goal is to solve the following MinDisagree version of the motif clustering problem, termed Mixed Motif Correlation Clustering (MMCC): Fix multiple motif graphs in the initial graph of possibly different sizes that belong to the set $S=\{{k_1< k_2< \ldots < k_p\}},$ 
and seek a vertex partition $C=(C_1,\ldots,C_s)$, $s \geq 1$, that solves
\begin{align}\label{eq:mmcc}
\text{(MMCC)} \;\; \min_{C} \sum_{t=1}^p \;  &\lambda_{t} \, \sum_{K\subseteq C_i \, \text{ for some $i$}, \, |K|=k_t}(1-w_{K})
\notag
\\
+ &\lambda_{t} \, \sum_{K\not \subseteq C_i \, \text{for all $i$}, \, |K|=k_t}w_{K}.
\end{align}
Here, $\lambda_{t} \geq 0$ denotes the relevance factor of motifs of size $k_t$. Note that by choosing $\lambda_1=1$ for edges and setting all other relevance factors to zero, we arrive at the classical correlation clustering formulation. 

To explain the underlying clustering approach, we henceforth assume that $p=2$, and that the motifs are of size 
two and three (i.e., edges and triangles). For simplicity of exposition, in our theoretical analysis we fix the relevance factors to $\lambda_1=\lambda_2=1$ (and set all other relevance factors to zero). In the subsequent simulations, we allow the $3$-tuple relevance factor $\lambda_2$ to change in order to explain practical community detection findings. 

It may be shown 
that the edge/triangle MMCC problem is NP-complete by using a reduction from the Partition into Triangles problem~\cite{creignou1995class} (The proof of this result may be found in Appendix~\ref{app:MMCCHard}. We only outline the triangle clustering proof, as the edge/triangle case is a simple consequence of this result and the one pertaining to classical edge correlation clustering). Hence, we focus on developing (constant) approximation algorithms for the underlying problem. 

As before, let $S=\{{2,3\}}$ be the set of motif sizes, and let $E(V)$ 
and $\mathcal{T}(V)$ stand for the set of all edges and $3-$tuples of $V$, respectively. Let $T$ stand for a generic $3$-tuple and 
let $x_T$ denote the indicator of the event that the vertices in the tuple are split among clusters. Furthermore, let $x_e,$ $e\in E(G),$ denote the indicator of the event that the pair of vertices corresponding to $e$ belongs to different clusters 
(i.e., $x_e=0$ if $e=(vw)$ and $v$ and $w$ belong to the same cluster, and $x_e=0$ otherwise). As for the general MMCC problem, 
we let $w_e$ denote the similarity weight of a $2$-tuple, and $w_T$ denote the similarity weight of a $3$-tuple $T$. Recall that $2$-tuples and $3$-tuples that correspond to edges and triangles in the initial graph will be weighted differently than $2-$tuples and $3$-tuples corresponding to nonedges and nontriangles.

By relaxing the indicator variable constraints to $x_{T},x_e \in [0,1]$, we arrive at the following LP problem formulation for the MMCC problem: 
\begin{align} 
\min_{\{x_e,x_T\}}\, &\sum_{e \in E(V)} w_e \, x_e+(1-w_e)(1-x_e)+ \\
&\sum_{T\in\mathcal{T}(V)} w_T \, x_T+(1-w_T)(1-x_T) \;\; \text{s.t. } \nonumber \\
& a) \; x_T\geq x_e  \; \text{(for all $e\in E(V)$ and $e \subset T \in \mathcal{T}(V)$)}, \nonumber \\
& b)\; x_T\leq \frac{1}{2}\sum_{u,v\in T} x_{uv}, \;\; x_T\leq 1 \; \text{(for all $T \in \mathcal{T}(V)$)},
\nonumber \\
&x_e\geq 0 \;\; \text{(for all $e \in E(V)$)}, \nonumber\\
&c) \;x_{vw}\leq x_{uv}+x_{uw} \; \text{(for all distinct $u,v,w \in V)$}. \nonumber
\end{align}
Here, the constraints are to be interpreted as follows: The constraint a) ensures that if an edge lies across clusters, all triangles $T$ including that edge have to lie across clusters. The constraint b) ensures that if all three edges of a triangle lie within a cluster, then the corresponding triangle has to lie within the same cluster, and if a triangle is split, at least two edges lie across clusters. The constraint c) implies that placing two adjacent edges of a constituent triangle within a cluster leads to placing a third adjacent edge into the same cluster.

The rounding method accompanying this LP is described in Algorithm~1, with the parameters $\alpha,\beta$ set to $1/\max\{{S\}}=1/3$. Except for a different scaling scheme, the proposed rounding procedure essentially follows the classical region growing method of~\cite{CGW1}, but imposes nontrivial analytical challenges when coupled with our new LP formulation.
\begin{table}[htb]
\centering
\begin{tabular}{l}
\hline
\label{alg:rounding2}
\textbf{Algorithm 1 Rounding Procedure with parameters $\alpha,\beta = \frac{1}{3}$}\\
\textbf{Initialization:} $\mathcal{V} = V(G)$; \\
\ 1: \textbf{repeat} \\
\ 2: \quad Choose an arbitrary ``pivot vertex'' $u$ in $\mathcal{V}$;\\
\ 3: \quad Let $\Na(u)=\{v\in \mathcal{V} -\{u\}: x_{uv}\leq \alpha\}$;\\
\ 4: \quad \textbf{if} $\sum_{v\in \Na(u)} x_{uv}>  \beta\alpha|\Na(u)|$\\
\ 5: \quad \quad Output the singleton cluster $\{u\}$;\\
\ 6: \quad \textbf{else} \\
\ 7: \quad \quad Output the cluster $C=\Na(u)\cup\{u\}$; \\
\ 8: \quad  Let $\mathcal{V}=\mathcal{V}-C$ \\
\ 9: \textbf{until} $|\mathcal{V}|< 3$;\\
\textbf{Output:} Output all sets $C$;\\
\hline
\end{tabular}
\end{table}

\begin{theorem} For the parameter choices $\alpha,\beta = 1/3$, the LP and rounding algorithm provides an $1/(\alpha\beta)=9$-approximation for the MMCC problem. \end{theorem}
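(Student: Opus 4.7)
The plan is to adapt the Charikar--Guruswami--Wirth (CGW) region-growing analysis for LP-rounded correlation clustering to the mixed edge-and-triangle objective of MMCC. The key claim I would aim to prove is that, iteration by iteration, the rounded cost attributed to that iteration is bounded by $1/(\alpha\beta)=9$ times the LP cost attributed to the same region; summing over iterations and observing that the LP relaxation has value no greater than the optimum integer solution then yields the $9$-approximation bound.

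First I would fix an optimal LP solution $(x_e^*,x_T^*)$ and decompose the rounded objective by attributing every edge and every triangle to the unique iteration whose pivot $u$ and cluster $C_i$ first determine its status in the rounded solution. An edge or triangle lying entirely in $C_i$ is attributed there, as is any edge or triangle containing the pivot $u$ that is split across $C_i$ and the remaining vertex set $\mathcal{V}_i\setminus C_i$. For a fixed iteration $i$ with pivot $u$ and neighborhood $N=N_\alpha(u)$, the three LP constraints drive the case analysis. The triangle inequality (c) gives $x_{vw}^*\le x_{uv}^*+x_{uw}^*\le 2\alpha=\tfrac{2}{3}$ for $v,w\in N$, which bounds the ratio of rounded cost to LP cost by $1/(1-2\alpha)=3$ for merged edges within $C_i$. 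For edges $(u,v)$ with $v\in\mathcal{V}_i\setminus C_i$, the defining condition $x_{uv}^*>\alpha$ gives a ratio bound of $1/\alpha=3$. For triangles $T\subseteq C_i$ containing the pivot, constraint (b) combined with (c) yields $x_T^*\le\tfrac{1}{2}(\alpha+\alpha+2\alpha)=2\alpha=\tfrac{2}{3}$, again giving ratio $\le 3$; for split triangles containing $u$, constraint (a) applied to the split edge incident to $u$ gives $x_T^*\ge x_{uv}^*>\alpha$, once more yielding ratio $\le 3$.

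The principal obstacle will be triangles $T\subseteq N$ that do not contain the pivot. For these, constraint (b) gives only $x_T^*\le\tfrac{1}{2}(2\alpha+2\alpha+2\alpha)=3\alpha=1$, which is too weak to bound the rounded cost $1-w_T$ by $9$ times the LP cost on a term-by-term basis: when $w_T$ is small and $x_T^*$ is close to $1$, the LP contribution nearly vanishes. To handle this case I plan to amortize globally inside the region, using the algorithm's cluster-output rule $\sum_{v\in N}x_{uv}^*\le\beta\alpha|N|$. By iterating the triangle inequality we obtain $\sum_{\{v,w\}\subseteq N}x_{vw}^*\le(|N|-1)\sum_{v\in N}x_{uv}^*\le(|N|-1)\beta\alpha|N|$, and summing constraint (b) over all triangles in $N$ gives $\sum_{T\subseteq N}x_T^*\le\tfrac{|N|-2}{2}\sum_{\{v,w\}\subseteq N}x_{vw}^*$. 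Combining these aggregate bounds with the slack in the edge LP terms (which lie well below the factor-of-$9$ ceiling, since merged edges inside $N$ only incur ratio $3$) is what should close the $9$-approximation for the triangle contribution.

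The singleton-output case $C_i=\{u\}$ is handled symmetrically. The triggering condition $\sum_{v\in N}x_{uv}^*>\beta\alpha|N|$ is used to amortize the split cost of edges $(u,v)$ with $v\in N$: vertices $v$ with $x_{uv}^*\ge\beta\alpha=\tfrac{1}{9}$ automatically satisfy $w_{uv}\le 9\,w_{uv}x_{uv}^*\le 9\,\mathrm{LP}(uv)$, while vertices with $x_{uv}^*<\tfrac{1}{9}$ are charged against the LP slack generated by the many vertices with larger $x_{uv}^*$ whose existence the singleton condition guarantees. Triangles with vertex set $\{u,v,w\}$ for $v,w\in N$ are controlled by constraint (a), propagating the edge bounds to $x_T^*$. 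Summing the per-iteration inequalities over all iterations then gives the desired $1/(\alpha\beta)=9$ approximation.
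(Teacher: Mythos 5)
Your overall framework (region growing, per-iteration charging of rounded cost against LP cost, with the easy ratios $1/\alpha$ and $1/(1-2\alpha)$ for edges and for pivot-containing triangles) matches the paper's, and you correctly isolate the hard case: triangles $T\subseteq N_\alpha(u)$ with $u\notin T$ that are kept inside the output cluster. But your proposed resolution of that case has a genuine gap. Bounding the aggregate $\sum_{T\subseteq N}x_T^*$ does not lower-bound the aggregate LP cost $\sum_T\bigl[(1-w_T)(1-x_T^*)+w_Tx_T^*\bigr]$ relative to the cluster cost $\sum_T(1-w_T)$: nothing in your inequalities prevents $w_T=0$ from sitting exactly on the few tuples with $x_T^*$ near $1$ (where the LP cost $1-x_T^*$ vanishes while the keep-cost is $1$) and $w_T=1$ on everything else (contributing nothing to either side), so your aggregate bound together with the ``slack in the edge terms'' is still consistent with an unbounded ratio. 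What actually rules this out is not the sum of the $x_{uj}^*$ over $N_\alpha(u)$ but their distribution: the paper sorts $N_\alpha(u)$ by $x_{uj}^*$, groups the triangles by their $\prec$-maximal vertex $g$ into sets $E_g$, uses the prefix-average bound $\sum_{j\prec g}x_{uj}^*\le\alpha\beta\lvert R_g\rvert$, and plays the two opposing bounds $1-x_T^*\ge 1-\sum_{j\in T}x_{uj}^*$ and $x_T^*\ge x_{ug}^*-\tfrac12\sum_{j\in T\setminus\{g\}}x_{uj}^*$ against each other so that the $w_Tx_T^*$ term compensates the loss in $(1-w_T)(1-x_T^*)$, yielding the per-group constant $1/(1-2\alpha-\beta\alpha)=9/2$. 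That sorted, per-group amortization is the missing idea; a single global average over all of $N_\alpha(u)$ will not produce it.

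A second gap: your attribution rule only charges split triangles to an iteration when they contain the pivot, but a triangle with one or two vertices in $C_i$ and the rest in $\mathcal{V}_i\setminus C_i$, none of them the pivot, is also broken at iteration $i$ and belongs to no later region. The paper handles these separately, grouping them by $T_1=T\setminus N_\alpha(u)$ and splitting on whether some $g\in T_1$ has $x_{ug}^*\ge(1+\beta)\alpha$; this is precisely where the charges $1/(\alpha\beta)$ and $1/[\alpha(1-2\beta)]$ arise, and it is not subsumed by any case you analyze. Your singleton case is closer in spirit to the paper's (which sums the per-tuple bound $(1-w_T)(1-x_T^*)+w_Tx_T^*\ge\tfrac12(x_{uj}^*+x_{ug}^*)$ over all pairs in $N_\alpha(u)$ and invokes the triggering condition to reach $\alpha\beta\binom{\lvert N_\alpha(u)\rvert}{2}$), but as written it addresses edges rather than triangles and would need the same explicit summation to close.
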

\begin{proof} It may be shown that proving approximation guarantees for clustering of multiple motifs may be reduced to proving corresponding results for the \emph{largest size motif only}, which in this case corresponds to a $3$-tuple. The performance guarantees for triangle clustering are established in Appendix~\ref{sec:app}.  \end{proof}

\vspace{-0.1in}
The number of constraints in the LP solver for the general MMCC problem equals $\mathcal{O}(n^k)$, where $k$ is the size of the largest motif considered. For edge and triangle motifs, this results in a number of constraints roughly equal to $\mathcal{O}(n^3)$. To speed up computations and make the algorithm scalable for large networks one may utilize the sparsity of the constraints and efficient approximate LP solvers, such as those based on parallel stochastic-coordinate-descent~\cite{sridhar2013approximate}. The aforementioned LP solver offers order of magnitude improvements in execution speed compared to the Cplex LP solver.

Consider next the following alternative formulation of the motif correlation clustering problem. To simplify our explanation, we consider motifs involving $3$-tuples only, which we generically denote by $T=\{{a,b,c\}}, a,b,c \in V$ (The problem formulation below may be easily generalized to include any combination of motifs, analog to what was described for correlation clustering in Equation~(1)). Using the notion of vertex labels described in the introduction, the objective function of the $3$-tuple correlation clustering problem may be rewritten as: 
\begin{equation}
\sum_{T: \, \ell(a)=\ell(b)=\ell(c)} (1-w_T)+\sum_{T: \,(\ell(a)=\ell(b)=\ell(c))'} w_T. \notag
\end{equation}
Here, with a slight abuse of notation, $\mathcal{E}'$ stands for the complement of the event $\mathcal{E}$, which in this case indicates that at least two vertices in the $3$-tuple have different labels. This formulation also has a natural interpretation in the context of hypergraph clustering and it is straightforward to formulate a similar objective involving edges and triangles, which equals a correlation clustering formulation for hypergraphs with two types of edges. Similarly to what was described for correlation clustering, one may extend the triangle clustering paradigm into an overlapping clustering paradigm by introducing a set similarity function $H$, which this time operates on three sets, 
say $A,B,C$ so that
\begin{equation}
        \sum_{T: \, a \neq b \neq c} |H(\ell(a),\ell(b),\ell(c))-w_T|, \notag
        \end{equation}
where as before $T=\{{a,b,c\}}$. Note that if we choose a set similarity function of the form $H(A,B,C)=1$ if $A \cap B \cap C \neq \emptyset$, and $H(A,B,C)=0$ otherwise, we arrive at the (new) problem of \emph{triangle clique cover}. 
This type of cover may be easily formulated to include any higher order graph structure, and is the focal point of the analysis presented in the next section.
             \vspace{-0.1in}
\section{Overlapping Motif Correlation Clustering via Edge-Triangle Clique Covers of Graphs}
\label{sec:omcc}

Recall that an edge clique cover (ECC) of an undirected graph $G$ is a set of cliques of $G$ 
that collectively covers all of its \emph{edges}, and that the edge clique cover number (intersection number) of the graph $\teG$ equals the minimum number of cliques in any ECC.  We introduce the concept of a motif cover of a graph $G$, which is a set of cliques of $G$ 
that collectively covers all the chosen \emph{motif structures} in $G$. In particular, we focus on the new paradigm of \emph{edge-triangle clique cover} (ETCC) of a graph,
which is a set of cliques in the graph that collectively covers all edges and triangles in the graph. The smallest such number of cliques $\tetG$ will be referred to as the \emph{edge-triangle clique cover number}. Clearly, the edge-triangle clique cover formulation represents nothing more than a combinatorial interpretation of the motif correlation clustering problem outlined in the previous section, with each shared element of the sets $A,B,C$ describing a clique/cluster/community. An example illustrating the concepts of ECC and ETCC is shown in Figure~\ref{fig:toy_example}. 

It was shown in~\cite{Orlin1977} that determining $\teG$ is an NP-complete problem. The idea is to reduce the problem of determining the vertex clique cover number $\tvG$, which is known to be NP-complete, to the problem of determining $\teG$.  This result may be generalized to show that determining $\tetG$ is an NP-complete problem by using ideas from~\cite{KouStockmeyerWong78} and by reducing the problem 
of determining $\teG$ to the problem of determining $\tetG$. Details of the proof may be found in Appendix~\ref{app:ETCHard}.

As the edge-triangle clique cover problem is NP-complete, we focus on developing a simple simulated annealing algorithm for finding an approximate edge-triangle cover. One of the problem parameters of the annealing algorithm is the number (or an upper bound on the number) of near-cliques or cliques needed to cover the edges and triangles in the graph\footnote{Note that, in general, $M$ does not have to be a bound on the edge-triangle clique cover number, as one may want to have communities that are not necessarily cliques. Choosing the parameter $M$ to be smaller than the edge-triangle clique cover number will force smaller clusters to be lumped together.}. We derive one such upper bound by  a nontrivial generalizations of upper bounds on the intersection number derived in~\cite{ErdosGoodmanPosa1966,Alon1986} for the case of the edge-triangle clique cover number. 
\vspace{-0.05in}

\subsection{A Simulated Annealing Algorithm}
\label{subsec:simulated_annealing}


As the ETCC is hard to solve exactly, we seek an approximate empirical algorithm that may perform the covering efficiently on large scale networks. Such an approach was also proposed in the context of computing approximations for intersection numbers in~\cite{BonchiGionisUkkonenICDM2011,TsourakakisWWW2015}. There, given a fixed number of features (clusters, communities) $M$, the algorithm assigns subsets of features to the vertices of the graph in a way that maximizes a certain \emph{score}, which for simplicity may be taken to equal the number of pairs 
$(u,v) \in V \times V$ that satisfy the previously described set intersection conditions. Once a feature assignment with a large score is found, each set of vertices assigned one particular feature is treated as a
cluster, or equivalently, a community. As each vertex can be assigned more than one features, the output communities are naturally overlapping.
Furthermore, as the solution is only approximate, the communities do not necessarily correspond to cliques but to dense subgraphs, which is actually a desirable property for real world network community detection, where cliques as communities may be rather unrealistic. For example, in Facebook friendship networks, a group of people sharing one common feature -- say, having graduated from the same school -- does not necessarily imply pairwise Facebook friendship.
\begin{figure}[htb]
\centering
\includegraphics[scale=0.9]{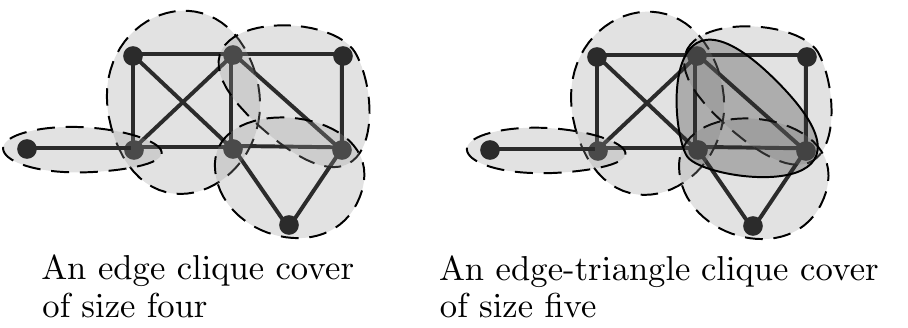}
\caption{An edge clique cover and an edge-triangle clique cover. Note that the edge clique cover does not form an edge-triangle clique cover because it does not cover the
middle (darker shaded) triangle.}
\label{fig:toy_example}
\end{figure}

In what follows, we describe a new simulated annealing algorithm for detecting overlapping communities
that takes into consideration both edges and triangles. We recall that an \emph{edge-triangle intersection representation} of a graph requires that two vertices be adjacent if and only if they share a common feature, and similarly, three vertices $u$, $v$, and $w$ form a triangle
if and only if they all share at least one common feature. We henceforth refer to these conditions as the \emph{Edge-Triangle Intersection Condition},
which essentially guarantees that two or three vertices belong to a common \emph{community} if and only if they are pairwise adjacent.  
Given an estimated number of communities $M$, the objective function may be written as follows:
\begin{equation}
\label{eq:SA}
\begin{split}
\max_{A \colon V \to 2^{[M]}} 
&\ \alpha_{\text{e}}\sum_{(u,v) \in E} \chi(A_u \cap A_v \neq \varnothing)\\ 
&+ \alpha_{\text{ne}}\sum_{(u,v) \notin E} \big(1 - \chi(A_u \cap A_v \neq \varnothing)\big)\\ 
&+ \alpha_{\text{t}}\sum_{(u,v,w) \in \mathcal{T}} \chi(A_u \cap A_v \cap A_w \neq \varnothing)\\ 
&+ \alpha_{\text{nt}}\sum_{(u,v,w) \notin \mathcal{T}} \big(1 - \chi(A_u \cap A_v \cap A_w \neq \varnothing)\big),
\end{split}
\end{equation}
where $A_u \subseteq \{1,2,\ldots,M\}$ denotes the set of features of the ground set assigned to the vertex $u$, $\mathcal{T}$ denotes the set of triangles of $G$,
and $\chi(\mathcal{C}) = 1$ if the clause $\mathcal{C}$ is correct and $\chi(\mathcal{C}) = 0$ otherwise. The parameters $\alpha$ essentially represent the rewards of edges, nonedges, triangles and nontriangles satisfying the edge-triangle intersection rules described above.

When $\alpha_{\text{e}} = \alpha_{\text{ne}} = \alpha_{\text{t}} = \alpha_{\text{nt}} = 1,$ the solution to the optimization problem~\eqref{eq:SA} corresponds to an approximate edge-triangle intersection representation of $G$ with highest \emph{score}, which is defined as the number of pairs $(u,v)$ and $3$-tuples $(u,v,w)$ that have feature sets satisfying the Edge-Triangle Intersection Condition. In sparse networks, the number of edges can be much smaller than the number of non-edges and the number of non-triangles. Therefore, it is desirable to tune the rewards as follows:
\begin{equation}
\label{eq:weights}
\alpha_{\text{e}} = 1, \alpha_{\text{ne}} = \frac{|E|}{\binom{n}{2}-|E|},
\alpha_{\text{t}} = \frac{|E|}{|\mathcal{T}|}, \alpha_{\text{nt}} = \frac{|E|}{\binom{n}{3}-|\mathcal{T}|},
\end{equation}
where the sums are normalized according to their numbers of terms. 

Let $s(A)$ denote the \emph{normalized} score of the feature assignment $A \colon V \to
2^{[M]}$ with respect to the weights given in \eqref{eq:weights}. 
The following empirical simulated annealing algorithm outputs a feature assignment that yields very good normalized scores in a number of tested practical settings. 
\begin{table}[htb]
\centering
\begin{tabular}{l}
\hline
\textbf{Simulated Annealing Algorithm}\\
\textbf{Input:} Graph $G = (V,E)$, mixing parameter $\mu$, number of features $M$,\\
\qquad\quad number of rounds $N$;\\
\ 1: Let $A \equiv A_0 \colon V \to 2^{[M]}$ be an arbitrary feature assignment;\\
\ 2: \textbf{repeat}\\
\ 3: \quad Choose a vertex $u \in V$ uniformly at random;\\
\ 4: \quad Select $A'_u \subseteq \{1,2,\ldots,M\}$ uniformly at random;\\
\ 5: \quad Set $A'_v = A_v$ for all $v \neq u$;\\
\ 6: \quad Set $A = A'$ with probability $\min \{1,\exp\big(\mu(s(A')-s(A))\big)\}$;\\
\ 7: \textbf{until} the loop has run for $N$ rounds;\\
\textbf{Output:} The best observed assignment $A$, i.e., the one \\
which has the highest normalized score;\\
\hline
\end{tabular}
\end{table}

Extensive simulations with the above algorithm seem to suggest that 
setting $\mu = M$ offers best performance for a wide range of network topologies. 
The number of rounds $N$ that ensures quality results is $\mathcal{O}(n\log(n))$. 

Note that calculating $s(A)$ requires roughly $\O(n^3)$ operations. 
Therefore, one should compute $s(A)$ only once at the start of the algorithm. 
At every iteration when a candidate feature set $A'_u$ is generated, to compute $s(A')$, 
one should use the formula
\[
s(A') = s(A) + s_u(A') - s_u(A), 
\]
where $s_u(A)$ comprises the terms in \eqref{eq:SA} that involve $u$. 
There are $\binom{n-1}{2} + n - 1$ such terms. Therefore, in each iteration,
the computational complexity scales as $\O(n^2)$. Jointly with the preprocessing step, the annealing algorithm therefore has total 
time complexity $\O(n^3\log(n))$. 
\vspace{-0.05in}
\subsection{Upper Bounds on the Edge-Triangle Clique Cover}

An upper bound on the number of features, or equivalently, an upper bound on the edge-triangle clique cover number, may be used to guide the choice of the input parameter $M$ of the annealing algorithm (See the Simulation results section for a discussion of this issue). 
To determine a tight bound on the edge-triangle clique cover number, we recall a classical result from graph theory~\cite{ErdosGoodmanPosa1966}, which states that the edge clique cover number $\tetG$
satisfies the following inequality: 
\[
\teG \leq \left\lfloor \frac{n^2}{4} \right\rfloor,
\] 
for any graph $G$ on $n$ vertices. Equality is met when $G$ is the Tur\'{a}n graph $T(n,2)$~\cite{Turan1941}, a complete bipartite graph with one part consisting of $\lfloor n/2 \rfloor$ vertices and the other part consisting of $\lceil n/2 \rceil$ vertices. 
Next, we establish a nontrivial extensions of this result for $\tetG$.
\begin{theorem} 
\label{thm:upper_bound}
For any graph $G$ on $n \geq 7$ vertices, one has
\begin{equation}
\label{eq:upper_bound}
\tetG \leq 
\begin{cases}
\dfrac{n^3}{27}, &\text{ if } n \equiv 0 \pmod 3,
\vspace{0.1in} \\
\dfrac{(n-1)^3}{27} + \dfrac{(n-1)^2}{9}, &\text{ if } n \equiv 1 \pmod 3,
\vspace{0.1in} \\
\dfrac{(n+1)^3}{27} - \dfrac{(n+1)^2}{9}, &\text{ if } n \equiv 2 \pmod 3. 
\end{cases}
\end{equation}
\end{theorem}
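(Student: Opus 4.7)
The plan is first to identify the extremal configurations. The right-hand side of~\eqref{eq:upper_bound} equals the number of triangles in the Tur\'{a}n graph $T(n,3)$, the complete tripartite graph with parts as equal as possible: when $n \equiv 0 \pmod 3$ this is $(n/3)^3 = n^3/27$, and for the other residues a direct calculation yields exactly the listed correction terms. Since $T(n,3)$ is $K_4$-free, every one of its triangles is a maximal clique and must appear as its own element in any edge-triangle clique cover, and since every edge of $T(n,3)$ lies in a triangle, the triangle count equals $\tetG(T(n,3))$, confirming sharpness. For small $n$ in the residue class $n \equiv 1 \pmod 3$ the complete bipartite graph $K_{\lfloor n/2\rfloor,\lceil n/2\rceil}$ provides additional sharp examples (for instance $K_{3,4}$ for $n=7$ yields $\tetG=12$), so the proof must accommodate both bipartite and tripartite extremal structures.

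The proof itself would proceed by a case split on whether $G$ contains a copy of $K_4$. If $G$ is $K_4$-free, then every triangle of $G$ is a maximal clique and must be covered by itself, so $\tetG \le T(G) + \epsilon(G)$, where $T(G)$ counts triangles and $\epsilon(G)$ counts edges that lie in no triangle. The key combinatorial inequality to establish is then $T(G) + \epsilon(G) \le$ the right-hand side of~\eqref{eq:upper_bound} for all $K_4$-free $G$ on $n \ge 7$ vertices. My approach would be a Zykov-style symmetrization: iteratively merge a vertex $v$ with a suitably chosen non-neighbor (or duplicate one neighbor over another) in a way that does not decrease $T(G) + \epsilon(G)$ and keeps the graph $K_4$-free, until $G$ reduces to a complete multipartite graph with at most three parts. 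The inequality then becomes an elementary maximization over a product of part sizes subject to their sum being $n$, which is maximized by the Tur\'{a}n partition.

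If $G$ contains a $K_4$, then a single $K_4$-clique already covers $\binom{4}{2}=6$ edges and $\binom{4}{3}=4$ triangles, which is highly efficient. I would argue by induction on $n$: remove a vertex $v$ lying in a $K_4$, apply the inductive bound to $G-v$, and cover the edges and triangles incident to $v$ using a combination of $K_4$'s through $v$ and, when necessary, individual edges or triangles at $v$, bounding the extra cliques by the gap between the $n$-vertex and $(n-1)$-vertex bounds. The main obstacle I foresee is the symmetrization step for the $K_4$-free case, since the quantity $T(G)+\epsilon(G)$ does not behave monotonically under the most naive local transformations (creating a triangle may destroy an isolated edge and vice versa), so a carefully designed potential that handles both regimes uniformly is needed. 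A secondary difficulty is keeping the induction tight enough to preserve the exact constants, especially the correction terms $\pm(n\mp 1)^2/9$ in the residue cases $n \equiv 1,2 \pmod 3$, which reflect the precise imbalance of the Tur\'{a}n partition and leave no slack in the counting.
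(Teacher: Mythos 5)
Your reduction in the $K_4$-free case is sound as far as it goes (for a $K_4$-free graph $\tetG$ really does equal the number of triangles plus the number of edges lying in no triangle), and your identification of the right-hand side of \eqref{eq:upper_bound} as the triangle count of $T(n,3)$ is correct. But both halves of your case split rest on steps that are not established. In the $K_4$-free half, the inequality $T(G)+\epsilon(G)\le$ the right-hand side of \eqref{eq:upper_bound} is exactly the content to be proved, and Zykov-style symmetrization is known to control $T(G)$ alone (the maximum number of triangles in a $K_4$-free graph), not the mixed quantity $T(G)+\epsilon(G)$; you yourself flag that the natural merging and duplication moves can decrease it. As written, this half is a plausible conjecture with no proof.

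The more serious problem is the $K_4$ half. The gap between the bound at $n$ and at $n-1$ is only about $n^2/9$ in every residue class, whereas the number of cliques needed to cover all triangles through a single vertex $v$ equals the edge clique cover number of the link $G[N(v)]$, which by the \erdos--Goodman--\posa{} bound can be as large as roughly $\deg(v)^2/4\approx n^2/4$. Membership of $v$ in a $K_4$ does not prevent this: if the link of $v$ is a large complete bipartite graph plus one edge, then $v$ lies in a $K_4$ yet almost every triangle through $v$ is a maximal clique and needs its own cover element, so the induction step overspends by a factor of about $9/4$. You would need to show that some vertex always admits a cheap link, which is a substantial additional argument that you have not supplied. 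This is precisely why the paper's proof removes an entire triangle $(u,v,w)$ at each step: Lemma~\ref{lem:0} shows that all edges and triangles meeting $\{u,v,w\}$ can be covered by at most $\lfloor n^2/3\rfloor-n+1$ cliques, which matches the three-step gap $\frac{n^3}{27}-\frac{(n-3)^3}{27}$ exactly, and the triangle structure of $\{u,v,w\}$ is what allows its ``link'' to be covered efficiently (one clique per outside vertex for the mixed edges and triangles, plus a \turan{} or inductive bound for the rest). Incidentally, your observation that $K_{3,4}$ also attains the value $12$ at $n=7$ is worth recording, since it bears on the uniqueness claim of Theorem~\ref{thm:Turan}, but it does not affect Theorem~\ref{thm:upper_bound} itself.
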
 
\begin{proof} The proof of Theorem~\ref{thm:upper_bound} is rather involved, and a sketch of the arguments is presented in the Appendix~\ref{app:erdos}. 
\end{proof}

\begin{theorem} 
\label{thm:Turan}
A graph of order $n$ has the edge-triangle clique
cover number $\tet$ attaining the upper bound given in Theorem~\ref{thm:upper_bound}
if and only if it is the Tur\'{a}n graph $T(n,3)$, a complete tripartite graph where the sizes of the parts differ from each other by at most one.
\end{theorem}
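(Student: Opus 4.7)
First I would verify that the Tur\'{a}n graph $T(n,3)$ realizes the bound of Theorem~\ref{thm:upper_bound}. Since $T(n,3)$ is complete tripartite, it is $K_4$-free, so every triangle in $T(n,3)$ is a maximal clique. Every edge of $T(n,3)$ lies in at least one triangle, obtained by adjoining any vertex of the third part, so the collection of all triangles of $T(n,3)$ forms an ETCC. Because no clique of size exceeding $3$ exists in $T(n,3)$, each triangle must appear as its own element of any ETCC, giving $\tet(T(n,3))$ exactly equal to the triangle count $t(T(n,3))$. A direct count of $t(T(n,3))$ with parts of sizes $(k,k,k)$, $(k,k,k+1)$, or $(k,k+1,k+1)$ matches the expression in Theorem~\ref{thm:upper_bound} in each residue class of $n$ modulo $3$.

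\textbf{Converse direction: reducing to the $K_4$-free case.} Now suppose $\tet(G) = f(n)$, where $f(n)$ denotes the bound from Theorem~\ref{thm:upper_bound}. The plan is to first argue that $G$ must be $K_4$-free, and then apply a Zykov-type extremal argument. For any $K_4$-free graph $H$ one has the identity
\[
\tet(H) = t(H) + e_0(H),
\]
where $t(H)$ is the number of triangles and $e_0(H)$ is the number of edges lying in no triangle: with no $K_4$ available, each triangle is a maximal clique and must appear separately in any ETCC, and each triangleless edge must also be its own cover element. To show $G$ is $K_4$-free, I would argue by contradiction. If $K \cong K_4 \subseteq G$, a refinement of the construction behind Theorem~\ref{thm:upper_bound}---substituting the single clique $K$ for the four triangles it contains and adjusting adjacent cover elements---yields an ETCC of size strictly below $f(n)$, contradicting the assumption $\tet(G) = f(n)$.

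\textbf{Uniqueness via Zykov symmetrization.} With $G$ now $K_4$-free, the identity above gives $t(G) + e_0(G) = f(n)$. By Zykov's classical extremal theorem, the number of triangles in a $K_4$-free graph on $n$ vertices is maximized uniquely by the Tur\'{a}n graph $T(n,3)$, with $t(T(n,3)) = f(n)$. Hence $t(G) \le f(n)$ and $e_0(G) \ge 0$. If $t(G) < f(n)$, then $e_0(G) > 0$, so there exists an edge $uv \in E(G)$ with $N(u) \cap N(v) = \varnothing$. I would then perform a Zykov-style symmetrization at a pair of non-adjacent vertices whose triangle contributions are unequal: replacing the lower-contributing neighborhood by the higher-contributing one preserves $K_4$-freeness and strictly increases $t(G)$, while it cannot convert an edge already lying in a triangle into a triangleless edge, so $e_0(G)$ does not grow enough to compensate. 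This contradicts the extremality $t(G) + e_0(G) = f(n)$. Therefore $t(G) = f(n)$, and Zykov's uniqueness statement forces $G = T(n,3)$.

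\textbf{Main obstacle.} The main obstacle is the $K_4$-removal step: quantifying precisely how much the presence of a $K_4$ inside $G$ lowers $\tet(G)$ below the unrestricted bound $f(n)$. The intuitive ``save three cover elements'' heuristic must be made compatible with the global construction underlying Theorem~\ref{thm:upper_bound}, since the cliques used to cover other regions of the graph may overlap the $K_4$ in subtle ways and the naive substitution could invalidate tightness of the inductive inequality. A secondary technical difficulty is ensuring the monotonicity of the pair $(t, e_0)$ under Zykov symmetrization---in particular that $e_0$ does not rise enough to offset the strict gain in $t$---which requires case analysis on how the swapped neighborhood interacts with triangles at distance two from $u$ and $v$, and a separate boundary verification for the residue classes $n \equiv 1, 2 \pmod 3$ where the extremal parts are unbalanced.
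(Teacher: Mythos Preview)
The paper states Theorem~\ref{thm:Turan} without proof, so a direct comparison is not possible. Given its placement immediately after the inductive proof of Theorem~\ref{thm:upper_bound}, the intended argument is almost certainly to trace equality through that induction: $\tet(G)=f(n)$ forces both $\tet(G_{n-3})=f(n-3)$ (so $G_{n-3}=T(n-3,3)$ by an inductive version of Theorem~\ref{thm:Turan}) and equality in Lemma~\ref{lem:0}, whose case analysis then pins down how the removed triangle $\{u,v,w\}$ attaches to $G_{n-3}$. Your extremal-combinatorics route via the identity $\tet=t+e_0$ on $K_4$-free graphs is genuinely different and, where it works, arguably cleaner.

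Your forward direction is correct. In the converse, however, the Zykov step has a structural gap beyond the two you already flag. First, the assertion that symmetrization ``cannot convert an edge already lying in a triangle into a triangleless edge'' is false: if $xy$ has $u$ as its \emph{only} common neighbor and you reassign $N(u)$ away from $x$, then $xy$ becomes triangleless, so monotonicity of $t+e_0$ under your move is not established. Second, and more fundamentally, your move applies only when some non-adjacent pair has unequal triangle counts; the fixed points of this process are precisely the complete multipartite $K_4$-free graphs, i.e.\ the complete bipartite and complete tripartite graphs---not $T(n,3)$ alone. You never rule out the bipartite branch, and in fact you cannot in general: for $n=7$ the graph $K_{3,4}$ is triangle-free with $\tet(K_{3,4})=12=f(7)$, so the ``only if'' direction is \emph{false as stated} at $n=7$. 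Any correct proof must therefore restrict to $n\ge 8$ and include the explicit comparison $\lfloor n^2/4\rfloor<f(n)$ to eliminate the bipartite fixed points. The $K_4$-removal step remains a genuine gap as well, as you acknowledge; the equality-tracing route through Lemma~\ref{lem:0} sidesteps both of these issues (modulo the same $n=7$ caveat in the base case).
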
 

This general purpose bound may be improved for a number of families of graphs, and in particular for complements of sparse graphs~\cite[Lemma~3.2]{Alon1986}, as stated in our next theorem. 

\begin{theorem}
\label{pr:Alon}
If $\deg(v) \geq n - d$ for every vertex $v$ of $G$, a graph of order $n$, where $d \geq 1$, then 
$\tetG \leq \lceil 3e^3(d+1)^3\log_e n \rceil$. 
\end{theorem}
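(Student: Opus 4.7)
The plan is to adapt the probabilistic argument underlying Alon's edge clique cover bound, strengthening the sampling so that triangles, not just edges, are swept up. First, I would pass to the complement graph $\bar{G}$, in which every vertex has degree at most $d-1$ by hypothesis. I would then independently sample $k := \lceil 3 e^3 (d+1)^3 \log_e n \rceil$ random subsets $S_1, \ldots, S_k \subseteq V$, placing each vertex in $S_i$ with probability $p := 1/(d+1)$, and set
\[
C_i \;:=\; \{\,v \in S_i \,:\, N_{\bar{G}}(v) \cap S_i = \varnothing\,\}.
\]
By construction each $C_i$ is an independent set of $\bar{G}$, hence a clique of $G$.

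For a fixed triangle $T = \{u,v,w\}$ of $G$ (which is independent in $\bar{G}$), the event $T \subseteq C_i$ occurs iff $u,v,w \in S_i$ and no vertex of $B := N_{\bar{G}}(u) \cup N_{\bar{G}}(v) \cup N_{\bar{G}}(w)$ lies in $S_i$; note $B \cap T = \varnothing$ and $|B| \leq 3(d-1)$. Hence
\[
\Pr[T \subseteq C_i] \;\geq\; p^3 (1-p)^{3(d-1)} \;\geq\; \frac{1}{e^3 (d+1)^3},
\]
where the last step uses $\ln(1-p) \geq -p/(1-p)$, which yields $(1-p)^{3(d-1)} \geq \exp(-3(d-1)/d) \geq e^{-3}$. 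An entirely analogous computation gives $\Pr[\{u,v\} \subseteq C_i] \geq 1/(e^2(d+1)^2)$ for a fixed edge.

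By independence of the $S_i$, the choice of $k$ forces $\Pr[T \not\subseteq C_i \text{ for every } i] \leq \exp(-k/(e^3(d+1)^3)) \leq n^{-3}$. A union bound over at most $\binom{n}{3} \leq n^3/6$ triangles caps the triangle-failure probability at $1/6$, and the analogous union bound over $\binom{n}{2}$ edges produces a quantity of order $n^2 \cdot n^{-3e(d+1)}$, which is exponentially smaller. Summing these yields a total failure probability strictly less than one, so some realization produces cliques $C_1, \ldots, C_k$ that collectively cover every edge and every triangle of $G$, giving the claimed bound.

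The main technical obstacle I anticipate is the clean choice of constants: in particular, pinning down that $p = 1/(d+1)$ together with the inequality $(1-p)^{3(d-1)} \geq e^{-3}$ produces exactly the factor $3 e^3 (d+1)^3$ claimed in the statement, and that the ceiling in the definition of $k$, combined with the refinement $\binom{n}{3} \leq n^3/6$, suffices to push the union bound strictly below one. A subsidiary subtlety is that an edge of $G$ need not be contained in any triangle when $n$ is small relative to $d$, so edge coverage cannot be derived from triangle coverage alone; this is why the separate edge-level estimate is needed, even though it is dominated by the triangle estimate for the purpose of the final bound.
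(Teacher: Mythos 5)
Your proposal is correct and follows essentially the same probabilistic argument as the paper: sample $\lceil 3e^3(d+1)^3\log_e n\rceil$ random vertex subsets with inclusion probability $1/(d+1)$, prune each to a clique of $G$, lower-bound the per-sample coverage probability of a triangle by $1/(e^3(d+1)^3)$, and conclude via a first-moment/union bound over triangles and edges that the expected number of uncovered motifs is below one. The only differences are cosmetic (working explicitly in the complement, bounding the excluded neighborhood by $3(d-1)$ instead of $3d$, and a sharper but unnecessary edge estimate).
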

\vspace{-0.1in}
\section{Simulation Results} \label{sec:simulations}

We tested both the MMCC algorithm with different choices of the motif weights as well as the simulated annealing approach with a number of clusters upper bounded according to Theorem~\ref{thm:upper_bound} on two small scale networks, in order to be able to discuss in detail various community structures that arise due to motifs (e.g., triangle). 

In the former case, we always set the similarity weight of edges and triangles to $1$,
and only tune the dissimilarity weight of nonedges and the relevance factor of triangles $\lambda_2=\lambda$. 
Different dissimilarity weights give different ``clustering resolutions'': Increasing the dissimilarity weight clearly leads to small clusters conglomerating into larger clusters. 

In the later case, the main challenge is to determine the correct choice for $M$, as it effectively represents the number of clusters. Most approaches rely on using a fraction of the edges for training and the remaining edges for actual community testing. We may also use an input parameter $M$ based on the theoretical upper bound of Theorem 2, scaled depending on the resolution of the communities we want.

The first network considered was described in~\cite{Palla2005}, comprising \emph{four overlapping social communities} that exhibit a number of triangle subgraphs. We first tested the MMCC method on this network, with edges and triangles treated as motifs, and we ran the approximation algorithm for two different choices of edge/nonedge and triangle/nontriangle weights. 
In the first test, we set the dissimilarity weights $1-w_e$ of the nonedges to lie in the interval $[1/2- 0.9 \times \epsilon, 1/2 - 0.5 \times \epsilon]$, where $\epsilon$ denotes the edge density of the network, defined as $\epsilon=|E|/\binom{n}{2}$. We kept the dissimilarity weight close to the value $1/2$ to account for the lack of influence of the nonedges on the community structures, but still strictly below $1/2$ in order to allow for more flexibility in the vertex placement procedure. The similarity weight of edges was set to $1$. Furthermore, we let the relevance factor of triangles, $\lambda$, range from $0$ to $50$, and set the similarity weight of triangles to $1$ and that of nontriangles to $1/2$. For all triangle relevance values $\lambda$ in the range $0-0.1$, which are very small, we recovered the original four communities of~\cite{Palla2005}, as triangles effectively played no role in the community structure. The results are depicted in Figure~\ref{Palla_Pan}. For all triangle relevance values in the range $0.2-9$ we obtained the same clustering result, comprising three communities, as depicted in Figure~\ref{Palla_Pan}. This clustering differs from the original structure outlined in~\cite{Palla2005} in so far that two clusters were joined into one (colored pink, involving vertices labeled starting with $7$). This is a consequence of the fact that a large number of triangles were crossing the two clusters, and with an increased relevance value of triangles, these motifs were grouped together. As expected, by making $\lambda$ very large - say, a value between $10$ and $50,$ we obtain one single cluster, as all triangles cluster together.
 
Applying the overlapping clustering method based on simulated annealing on the same network results in the same structure as reported in~\cite{Palla2005}, including four communities, except for one slight change: Node $17$ now belongs to two different communities instead of just one, as illustrated in Figure~\ref{Palla_Overlapping}. The explanation behind this result is that since node $17$ creates a triangle with both nodes $18$ and $19$, and the triangle motif encourages
these three nodes to lie within the same cluster, which also appears to more realistically explain the community structure. Note that in the simulations, we set $M=4$ to fairly compare our findings with those of~\cite{Palla2005}. The edge-triangle intersection number for the graph equals $16$\footnote{The bound of Theorem 2 equals $450$, which is roughly an order of magnitude larger.}, and using $M$ closer to this value would recover finer resolution community structures.
\begin{figure}[htb]
\centering
\vspace{-0.15in}
\includegraphics[scale=0.22]{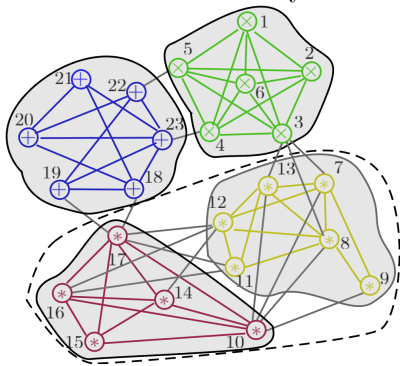}
\vspace{-0.06in}
\caption{Four and three nonoverlapping communities of the network~\cite{Palla2005}, obtained using the MMMC method.}
\label{Palla_Pan}
\vspace{-0.07in}
\end{figure}
The second example we present is the well studied Zachary Karate Club network~\cite{Zachary1977}. In the MMCC setting, we used the following parameter values: For the first set of tests, the dissimilarity weight of nonedges was set to $1/2 - 0.25 \times \epsilon$. The triangle similarity weight was set to $1$, and the relevance factor $\lambda$ kept in the range $1-4$. In this case, we found three, rather than the two original clusters, as node $10$ was placed in a cluster by itself (see Figure~\ref{KarateMMCC}). The reason behind this result is that the dissimilarity cost deviates significantly from the neutral value $1/2$ and there are a few connecting edges between $10$ and other nodes in the network. Node $10$ also does not close any triangles. For the second test, we set the dissimilarity weight of nonedges to be $1/2- 0.2 \times \epsilon$. In this case, we recovered the two ground truth clusters, with one mistake again relating to node $10$ which is now placed in a different cluster (see Figure~\ref{KarateMMCC}, where the node is marked by a dashed circle). The reason behind this classification is that the dissimilarity weight of nonedges is neutral, and that there are no triangle involving node $10$, so that $10$ is placed into the smaller of the two clusters.
\begin{figure}[htb]
\centering
\vspace{-0.1in}
\includegraphics[scale=0.25]{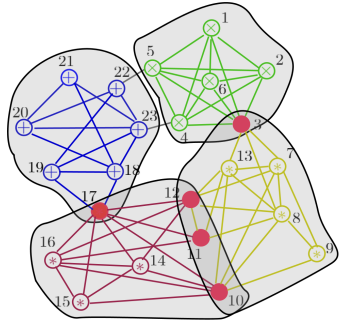}
\vspace{-0.52in}
\caption{An example of a network with four overlapping communities from~\cite{Palla2005}. Red nodes belong to overlapping communities.}
\label{Palla_Overlapping}
\vspace{-0.1in}
\end{figure}
The annealing algorithm with $M=2$ also recovers the two communities in the network (see Figure~\ref{Karateoverlapping})\footnote{The edge-clique number of the graph equals $34$, while Theorem 2 provides a rather loose upper bound of $1455$.}, except that now node $3$ belongs to both communities, as this node is not only well connected to both sides, but also closes a triangle
with both node $9$ and node $33$ in the left cluster. The edge-only version of the annealing algorithm~\cite{TsourakakisWWW2015} 
always misclassifies node $10$ by putting it into the right cluster, and it cannot find any
overlapping clusters. For a large range of values of the annealing parameter $M$, our method also puts node $3$ and node $34$ into two clusters simultaneously.

\vspace{-0.1in}
\section{Appendix} \label{sec:app}

\subsection{Proof of MMCC Hardness} \label{app:MMCCHard}

It is easy to see that the problem is in NP. To prove the claim, we focus our attention on the unweighted case $w_{K} \in\{0,1\}$ and use a reduction from the NP-complete Partition into Triangles problem.

Since $w_{K}\in\{0,1\}$, for simplicity of terminology we refer to a triplet $K$ with $w_{K}=1$ (respectively, $w_{K}=0$) as ``positive'' (respectively, ``negative''). We also use the term ``positive error'' to indicate that a positive triplet is placed across clusters and ``negative error'' to indicate that a negative triplet is placed within one cluster. Given a not necessarily complete graph $G=G(V,E)$, containing $n$ vertices where $n$ is a multiple of $3$, one problem of interest is to determine whether it can be partitioned into triangles. This problem, known as Partition into Triangles (PiT), has been proved to be NP-complete. To address the issue of MMCC hardness, we will exhibit a reduction of the PiT problem to the MMCC. As the first step in our proof, we construct a weighted graph $G^{w}$ that has the same vertex set as $G$. We set to $1$ the weights of triplets $G^{w}$ that correspond to triangles in $G$, and set the weights of all other triplets in $G^{w}$ to $0$. If there were a polynomial-time algorithm for the MMCC problem with the additionally imposed constraint that the size of each cluster is at most $3$, then we would be able to efficiently partition $G$ into triangles, a contradiction. As the MMCC algorithm does not necessarily generate clusters with bounded size, in what follows we describe how to construct another weighted graph, $H^{w},$ such that the MMCC algorithm applied on $H^{w}$ results in a bounded cluster-size run of MMCC on $G^{w}$.

The basic idea behind our approach is to impose the constraint on the size of clusters in $G^{w}$ by 
adding a large number of vertices into $H^{w}$ for each triplet in $G^{w}$, and then making the triplets 
inside the added vertices positive and other triplets negative. 
In this setting, a cluster in the new graph $H^{w}$ with more than $3$ vertices in $G^{w}$ causes too many negative errors and hence cannot be part of the optimal clustering. 

We now describe now how to construct a graph $H$ from $G$.
In addition to the vertices of $G$, for every triplet $\{j_1,j_2,j_3\}$ in $G$, $H$ contains additional $n^5$ vertices within a clique which we denote by $C_{j_1j_2j_3}$. Hence, $H$ contains $n+n^5{n\choose 3}$ vertices, and its edges include all edges inherited from $G$ along with the edges in the cliques and a set of edges fully connecting $\{j_1,j_2,j_3\}$ and $C_{j_1j_2j_3}$ (The vertices in the clique $C_{j_1j_2j_3}$ are not connected to any vertices inherited from 
$G$ other than $\{j_1,j_2,j_3\}$). It is also straightforward to show that $H$ has ${n\choose 3}\left[{n^5 \choose 2}+3n^5\right]+|E(G)|$ edges.
We use the term \emph{added sets} of a vertex $v \in H$ inherited from $G$ to refer to the vertices of the added cliques that contain $v$ as subscript; a similar terminology is used to refer to cliques containing pairs of vertices inherited from $G$. Clearly,
each vertex has ${n -1 \choose 2}$ corresponding added sets, while each pair of vertices has ${n-2 \choose 1}$ added sets. The weights of triplets of $H^{w}$ are determined as follows: Triplets comprising vertices from $G^{w}$ only have the same weights as those assigned in $G^{w}$; the weights of the remaining triplets, comprising vertices from $C_{j_1j_2j_3}\cup\{j_1,j_2,j_3\}$, have weight one, while all other triplets have weight zero.

Consider now a clustering $\mathcal{C}^*$ of $H^{w}$ of the following form:
\begin{enumerate}
\item There are ${n \choose 3}$ \emph{nonoverlapping} clusters.
\item Each cluster contains exactly one clique $C_{j_1j_2j_3}$ and potentially a subset of the corresponding three vertices $\{j_1,j_2,j_3\}$.
\item Each vertex in $V(G)$ lies in exactly one cluster that contains one of its corresponding added sets.
\end{enumerate}
In the above clustering, there are no errors arising due to triplets that lie across different added sets, since each cluster contains exactly 
one added set and the weights of triplets that lie across two clusters are equal to zero. The only errors arise from triplets with vertices contained in $V(G)$ or those involving both the vertices of $V(G)$ and the added sets. In the former case, the number of errors is at most ${n \choose 3}$. In the later case, each vertex in $V(G)$ is clustered together with just one added set and thus the number of positive errors induced by this vertex and its other corresponding added sets is exactly ${n^5 \choose 2}({n-1\choose 2}-1)$. For any pair of vertices in $V(G)$, the number of positive triplets that contain this pair and a vertex in the corresponding added sets of the pair is not larger than $n^5(n-2)$. Hence, the total number of errors for the described clusters is not larger than $n{n^5 \choose 2}({n-1\choose 2}-1)+{n \choose 2}n^5(n-2)+{n \choose 3}\sim\Omega(n^{13})$.

The clustering $C^*$ essentially partitions the vertices of $G$ into many small subsets, each of which containing at most three vertices. In our subsequent derivation, we show that the number of errors in a clustering that contains one cluster with at least four vertices from $V(G)$ must be larger than the number of errors induced by $C^*$.

First, observe that a clustering with fewer errors than $C^*$ has to have the size of each of its cluster lie in the interval $[n^5-n^4, n^5+n^4]$. 
Suppose that on the contrary there exists a cluster containing more that $n^5+n^4$ vertices. Then, there are at least ${n^5 \choose 2}n^4\sim\Omega(n^{14})$ errors caused by negative triangles across two different added sets within this cluster. Furthermore, each cluster must contain at least $n^5-n^4$ vertices of a clique, otherwise there are at least ${n^5 \choose 2}n^4\sim\Omega(n^{14})$ positive errors generated by ``splitting'' the corresponding added set. Since the size of each cluster is smaller than $n^5+n^4$, for each vertex in $V(G)$, the number of positive errors of the triplets formed by this vertex and two other vertices in the corresponding added sets of this vertex is lower bounded by ${n^5 \choose 2}{n-1\choose 2}-{n^5 \choose 2}-{n^4 \choose 2}$. 

Assume now that there exists a cluster that contains four vertices, say $\{j_1,j_2,j_3,j_4\}$, in $V(G)$. Then, there exists at least one vertex in $\{j_1,j_2,j_3,j_4\}$, say $j_1$, and at least $n^5-n^4$ other vertices that do not lie in a added set of $j_1$. Hence, the number of negative errors within this cluster is at least ${n^5-n^4 \choose 2}$. The total number of errors induced by such a clustering is therefore at least $n{n^5 \choose 2}({n-1\choose 2}-1)-n{n^4 \choose 2}+{n^5-n^4 \choose 2},$ 
which is larger than the number of errors in the clustering $\mathcal{C}^*$, for $n$ sufficiently large. Therefore, the optimal triangle-clustering has to be of the form of $\mathcal{C}^*$, imposing a constraint on the size of clusters in $G^{w}$.

\subsection{Proof of MMCC Approximation Guarantees} \label{app:approx}

Throughout the section, we use $\mathcal{V}$ to denote the set of unclustered vertices in one iteration. 

\begin{figure}[htb]
\centering
\vspace{-0.1in}
\includegraphics[scale=0.43]{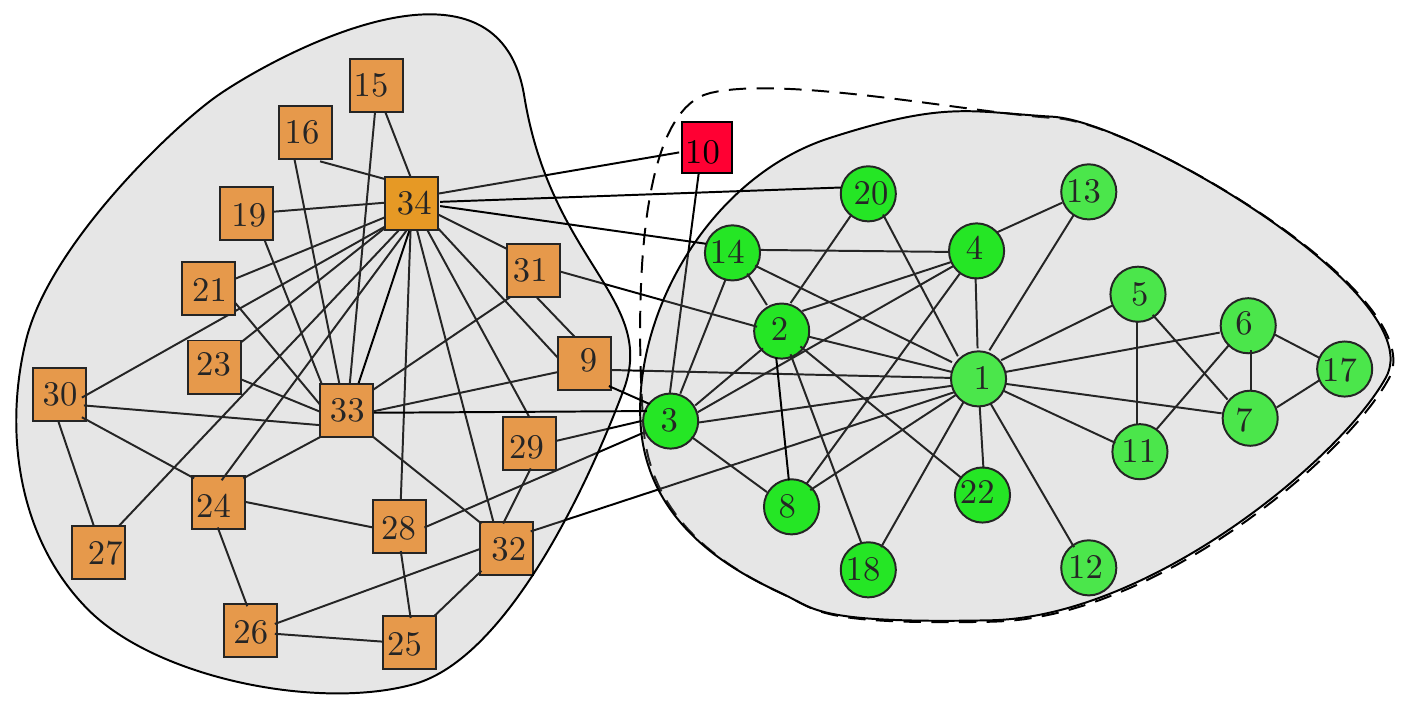}
\caption{Nonoverlapping communities for the Zachary Karate Club network~\cite{Zachary1977}, obtained using the MMCC method.} \label{KarateMMCC}
\vspace{-0.2in}
\end{figure}
\begin{figure}[htb]
\centering
\includegraphics[scale=0.48]{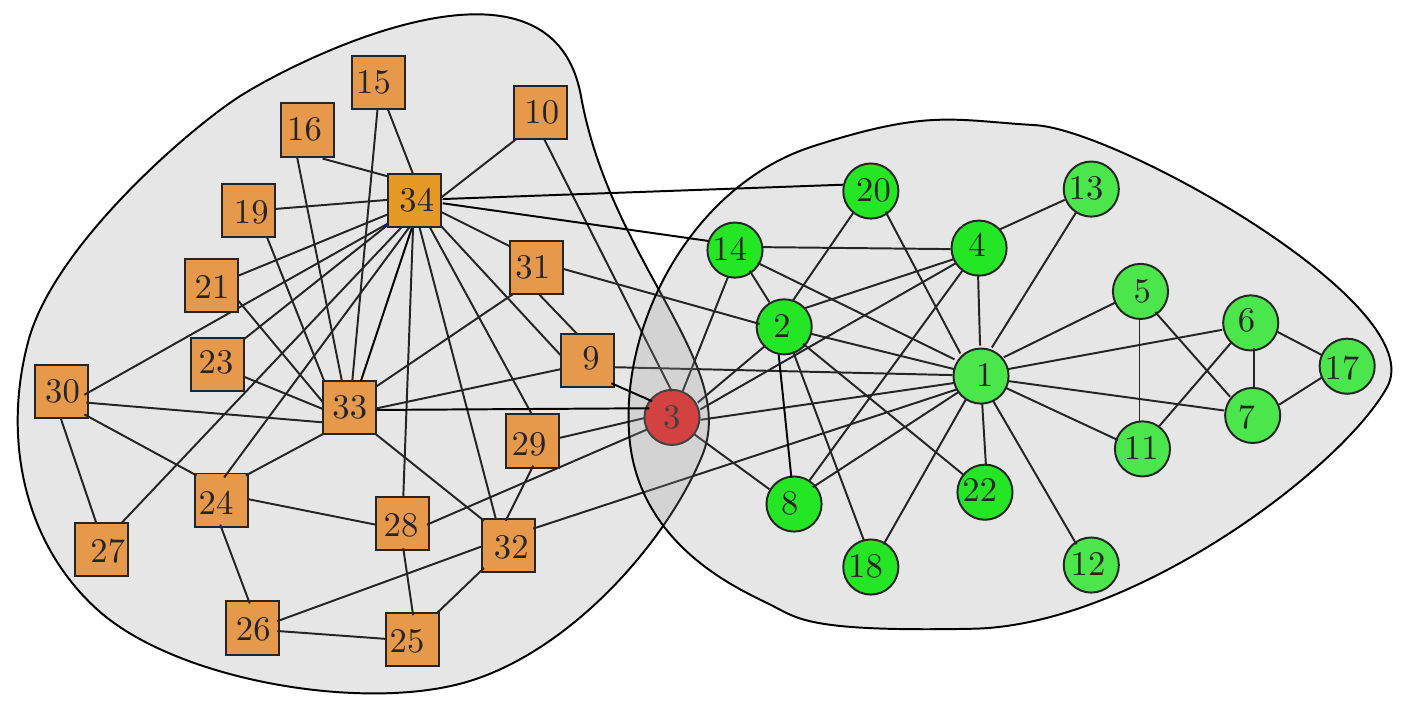}
\caption{Two overlapping communities detected in Zachary's Karate network~\cite{Zachary1977}.}  \label{Karateoverlapping}
\vspace{-0.11in}
\end{figure}
The proof to follows also often uses some immediate consequences of the LP constraints; it adopts the convention that $x_{jj} = 0$ for all $j \in V$:
\begin{enumerate}
\item $x_{jg}\geq x_{ij}-x_{ig}$ for any $i,j,g\in V$;
\item $x_{T}\geq \max_{j,g \in T}[x_{ij}-x_{ig}]$ for any $i\in V$;
\item $x_{T} \leq \frac{1}{2}\sum_{j,g\in T}x_{jg}\leq \frac{1}{2}\sum_{j,g\in T}(x_{ij}+x_{ig})\leq \sum_{j\in T}x_{ij}$ for any $i\in V$.
\end{enumerate}

When clustering splits (gathers) the endpoints of an edge or the vertices of a $3$-tuple into different clusters (in one single cluster), we call the result a \emph{break} (\emph{keep}). In each iteration of Algorithm 1, exactly one cluster will be output and thus break or keep the edges and $3$-tuples that have intersection with this cluster. To prove the rounding procedure can approximate the optimal solution within a constant factor $9$, it suffices to prove that for each iteration in Algorithm 1, the edges and $3$-tuples that are broken or kept will not increase the corresponding costs in the LP by more than $9$ times. We will do the analysis for the $3$-tuples only, since the analysis for edges is similar. 

Based on the output clusters being singletons or containing more vertices, we consider two different cases:

\caze{1} \emph{The output is a singleton cluster $\{i\}$}. 

 The clustering cost when outputting a singleton $\{i\}$ is $\sum_{T\subset \mathcal{T}(\mathcal{V}): i\in T}w_{T}$ while the LP cost is $\sum_{T\subset \mathcal{T}(\mathcal{V}): i\in T}(1-w_{T})(1-x_T)+w_{T}x_T$. 

 If $T\cap\{\mathcal{V}-(\{i\}\cup \Na(i))\}\neq \emptyset$, we have
 $x_{T}>\alpha$, so charging each such $3$-tuple $1/\alpha$
 times its LP-cost compensates for the cluster-cost. Therefore, it
 suffices to consider the $3$-tuples $T \in
 \mathcal{T}(\Na(i)\cup\{i\})$ with $i\in T$. Let $T=\{i,j,g\}$. Then, for any $j,g\in \Na(i)$ we have 
 $$\frac{1}{2}(x_{ij}+x_{ig})\leq x_{ijg}\leq x_{ij}+x_{ig}\leq 2\alpha,$$
 where the inequalities are based on the LP constraints. Hence, the LP cost of $T=\{i,j,g\}$ is bounded by
\begin{align*}         
&(1-w_{T})(1-x_{T})+w_{T}x_{T}\\
\geq&(1-w_{T})(1-x_{ij}-x_{ig})+\frac{1}{2}w_{T}(x_{ij}+x_{ig}) \\
 \geq &w_{T}\left[\frac{3}{2}(x_{ij}+x_{ig})-1\right]+(1-x_{ij}-x_{ig}).
\end{align*}
Since each $x_{ij}$ for $j \in T$ satisfies $x_{ij} \leq \alpha \leq 1/3$, the quantity
in square brackets is negative, so that $w_{ijg} \leq 1$ implies
\[
(1-w_{T})(1-x_{T})+w_{T}x_{T}\geq \frac{1}{2}(x_{ij}+x_{ig}).
\]
Summing over all $T=\{i,j,g\}$ such that $j,g \in \Na(i),\ j\neq g$, we see that
\begin{align*}
&\sum_{j,g \in \Na(i),j\neq g}[(1-w_{T})(1-x_{T})+w_{T}x_{T}] \\
\geq  &\sum_{j,g \in \Na(i),j\neq g}\frac{1}{2}(x_{ij}+x_{ig}) \geq \alpha\beta{\sizeof{\Na(i)} \choose 2}, 
\end{align*}
where the last inequality follows from the condition $\sum_{j \in \Na(i)}x_{ij} > \beta\alpha\sizeof{\Na(i)}$
that causes the algorithm to output $\{i\}$ as a singleton cluster.

Therefore, charging $1/(\alpha\beta)$ times the LP-cost to each
$3$-tuple that is kept or broken in Case 1 is enough to compensate for the total clustering cost
of these tuples.

\caze{2} \emph{The output is a cluster $\{i\}\cup \Na(i)$}. 

\emph{The cost of the $3$-tuples kept inside the cluster.}
The case $i\in T$ is the same as before: If $T=\{i,j,g\}$, then we have
$x_T\leq x_{ij}+x_{ig}\leq 2\alpha$, so charging
$1/(1-2\alpha)$ for this tuple is enough to compensate the
cluster-cost.

If $i\notin T$, order the vertices in $\Na(i)$ in
such a way that for any $j,g\in \Na(i)$, $j\prec g$
iff $x_{ij}< x_{ig}$ and assign an arbitrary order ($j\prec g$) when
the equality ($x_{ij}=x_{ig}$) holds.

For each vertex $g \in \Na(i)$, let $R_g=\{j \in
\Na(i) \st j\prec g\}$, and let $E_g$ be the set of
3-tuples $T$ such that $T\subset \Na(i)$ and $g$ is the largest vertex of $T$
according to $\prec$. (Thus, if $T \in E_l$, then $g \in E_g$ and $E_g \subset R_g \cup \{g\}$.)

Note that because of the order, we have $\sum_{j\in R_g} x_{ij}\leq
\alpha\beta|R_g|$.  Fix some $g \in \Na(i)$; we consider the
total cost of the $3$-tuples in $E_g$. The corresponding cluster-cost
is $\sum_{T \in E_g}1-w_{T}$ while the LP cost is $\sum_{T \in
  E_g}(1-x_{T})(1-w_{T})+x_{T}w_{T}$.

If $x_{ig} \leq \beta\alpha$, then for each
$T  \in E_g$, we have
\[ x_T \leq \sum_{j \in T}x_{ij} \leq 3x_{ig} \leq 3\beta\alpha, \] so
that charging $1/(1-3\beta\alpha)$ times the LP-cost to each
$3$-tuple in $E_g$ is enough to pay for the cluster cost of all such
tuples.

Now suppose that $x_{ig} > \beta\alpha$. In this case, for each $T  \in E_g$, we have $x_T \leq \sum_{j \in T}x_{ij}$,
hence $1 - x_T \geq 1 - \sum_{j \in T}x_{ij}$. Furthermore,
\[ x_T \geq \max_{j \in T-g}[x_{ig} - x_{ij}] \geq x_{ig} - \frac{1}{2}\sum_{j \in T-g}x_{ij}. \]
Letting $\sigma = \sum_{j \in T-g}x_{ij}$ so that $1-x_T \geq 1 - x_{ig} - \sigma$, we have
the following lower bound on the LP-cost of $T$:
\begin{align*}
  & (1-w_T)(1-x_T) + w_Tx_T \\
  \geq & (1-w_T)(1 - x_{ig} - \sigma) + w_T(x_{ig} - \frac{1}{2}\sigma) \\
  \geq &(1-w_T)(1 - 2x_{ig} - \frac{1}{2}\sigma) + x_{ig} - \frac{1}{2}\sigma. 
\end{align*}
Summing over all $T \in E_g$ and using the inequality $\sum_{T \in E_g}\frac{1}{2}\sum_{j \in T-g}x_{ij} \leq \sizeof{E_g}\beta\alpha$ yields the following lower bound on the total LP-cost of the
edges in $E_g$:
\begin{align*}
  &\sum_{T \in E_g}[(1-w_T)(1-x_T) + w_Tx_T] \\
  \geq &\sum_{T \in E_g}[(1-w_T)(1 - 2x_{ig} - \frac{1}{2}\sigma)+ x_{ig} - \beta\alpha] \\
  \geq &\sum_{T \in E_g}[(1-w_T)(1-x_{ig}-\frac{1}{2}\sigma - \beta\alpha)] \\
  \geq &\sum_{T \in E_g}\left[(1-w_T)(1-2\alpha - \beta\alpha)\right].
\end{align*}
Thus, charging each $3$-tuple in $E_g$ a factor of $1/(1-2\alpha-\beta\alpha)$ times its LP-cost
pays for the cluster-cost of all $3$-tuples in $E_g$.

\emph{The cost to break $3$-tuples across the cluster and the remaining part of $\mathcal{V}$.}  
As before, we call such tuples \emph{broken
  tuples}. Each broken tuple $T$ incurs a cluster-cost of $w_T$ and an
LP-cost of $x_Tw_T + (1-x_T)(1-w_T)$. First suppose that $T$ is a broken $3$-tuple with $i
\in T$, and $g \in T-\Na(i)\cup{i}$. Since $T$ is broken, we have $x_T \geq x_{ig}
> \alpha$, so charging $1/\alpha$ times the LP cost pays for
such $T$. We still must pay for the broken tuples $T$ with $i \notin T$.
For any set $T_1 \subset \mathcal{V}-\Na(i)$, $|T_1|<3$, let $E_{T_1}$ be the set of broken tuples $T$
such that $i \notin T$ and $T-\Na(i)= T_1$.  We show that the total
cluster-cost of the tuples in $E_{T_1}$ is at most a constant times their
total LP-cost.
First, suppose that there is some vertex $g \in T_1$ such that $x_{ig} \geq (1+\beta)\alpha$.
In this case, for every $T \in E_{T_1}$, we can take some arbitrary $j \in T \cap \Na(i)$ and obtain
\[ x_T \geq x_{ig} - x_{ij} \geq \beta\alpha, \]
since $j \in \Na(i)$ implies $x_{ij} \leq \alpha$. Thus, in this case,
charging $1/\alpha\beta$ times the LP-cost of each tuple in $E_{T_1}$ pays for the
cluster-cost of all tuples in $E_{T_1}$.

Next, suppose that $x_{ig} \leq (1+\beta)\alpha$ for all $g \in
T_1$. Consider any $T \in E_{T_1}$. Let $T_2 = T \cap \Na(i)$, let $\sigma_1 =
\sum_{j \in T_1}x_{ij}$ and let $\sigma_2 = \sum_{j \in
  T_2}x_{ij}$. We have the following bounds:
\begin{gather*}
 1 - x_{T} \geq 1 - \sum_{j \in T}x_{ij} = 1 - (\sigma_1 + \sigma_2), \\
 x_T  \geq \max_{g\in T_1,\ j \in T_2}[x_{ig} - x_{il}] \geq \frac{1}{\sizeof{T_1}}\sigma_1 - \frac{1}{\sizeof{T_2}}\sigma_2.
\end{gather*}
Combining these bounds yields the following lower bound on the LP-cost of $T$:
\begin{align}
  &(1-w_T)(1-x_T) + w_Tx_T \\
  \geq & (1-w_T)(1 - \sigma_1- \sigma_2) + w_T\left(\frac{\sigma_1}{\sizeof{T_1}} - \frac{\sigma_2}{\sizeof{T_2}}\right) \label{eq:ksquare-bound}\\
  = & w_T\left[\frac{\sizeof{T_1}+1}{\sizeof{T_1}}\sigma_1 + \frac{\sizeof{T_2}-1}{\sizeof{T_2}}\sigma_2 - 1\right] + 1 - \sigma_1 - \sigma_2\nonumber.
\end{align}
Using the bijection
between $E_{T_1}$ and ${\Na(i) \choose \sizeof{T_2}}$ and $\sizeof{T_2}=3-\sizeof{T_1}$, we see that $\sum_{T \in E_{T_1}}\sigma_2 \leq (3-\sizeof{T_1})\beta\alpha{\sizeof{\Na(i)} \choose (3-\sizeof{T_1})}$. Furthermore, since $\alpha, \beta \leq 1/3$, we have
\[ 1 - \sigma - \sizeof{T_1}\beta\alpha \geq 1 - \sizeof{T}(1+\beta)\alpha - \sizeof{T_1}\beta\alpha \]
\[ \;\;\;\;\;\;\;\;\;\;\;\;\;\;\;\;\;\;\;\;\;\;\;  \geq 1 - 2(1+\beta)\alpha - \beta\alpha \geq 0. \]
Therefore, summing the above inequality over all $T \in E_{T_1}$ gives the following
lower bound on the total LP-cost of all tuples in $E_{T_1}$:
\begin{align*}
  &\sum_{T \in E_{T_1}}[(1-w_T)(1-x_T) + w_Tx_T] \\
   \geq & \sum_{T \in E_{T_1}} w_T\left[\frac{\sizeof{T_1}+1}{\sizeof{T_1}}\sigma_1 + \frac{\sizeof{T_2}-1}{\sizeof{T_2}}\sigma_2 - 1\right] \\ 
           + &\sum_{T \in E_{T_1}} (1 - \sigma_1- \beta\alpha\sizeof{T_2}) \\
   \geq &\sum_{T \in E_{T_1}}w_T\left[\frac{1}{\sizeof{T_1}}\sigma_1 + \frac{\sizeof{T_2}-1}{\sizeof{T_2}}\sigma_2 - \beta\alpha\sizeof{T_2}\right]  \\
\geq & \sum_{T \in E_{T_1}}w_T(\alpha - 2\alpha\beta).
\end{align*}
As a result, charging a factor of $1/[\alpha(1-2\beta)]$ times the LP-cost
of each tuple in $E_T$ pays for the cluster-cost of all tuples in $E_T$.
 
In summary, if $\alpha, \beta \leq 1/3$, then charging each tuple a factor of $c$ times its LP cost,
where
\begin{align*}
c=\max\{\frac{1}{\beta\alpha}, \frac{1}{1-2\alpha}, \frac{1}{1-2\alpha-\beta\alpha},\frac{1}{\alpha(1-2\beta)}\} = \frac{1}{\beta\alpha},
\end{align*}
is enough to compensate the cluster-cost of all tuples. By setting $\alpha=\beta=1/3$, which minimizes $c$, we obtain $9$ as the approximation factor. 

\subsection{Proof of Hardness for Finding the Edge-Triangle Cover Number} \label{app:ETCHard}

It is obvious that ETCC is in NP. 
We prove the NP-completeness of this problem by establishing a reduction from the ECC problem,
which is know to be NP-complete~\cite{Orlin1977, KouStockmeyerWong78}.
Let $G$ be an arbitrary graph of order $n$ and let $M \geq 1$. Let $G'$ be the graph obtained from $G$ by introducing 
\begin{itemize}
	\item $s = 1 + \tetG$ new vertices $\{u_1,\ldots, u_s\}$, and
	\item $sn$ new edges that connect the new vertices to all existing vertices of $G$.
\end{itemize}
By Theorem~\ref{thm:upper_bound}, the graph $G'$ has order and size polynomial in $n$. 
Let $M' = sM + \tetG$. We demonstrate that $\te(G) \leq M$ if and only if $\tet(G') \leq M'$. 

Indeed, suppose that $\te(G) \leq M$, i.e. there is a set $\A$ of at most $M$ cliques in $G$
that collectively cover all edges in $G$. Then we can cover all edges and triangles in $G'$ by a set of cliques $\B$
obtained from $\A$ by adding each vertex in $\{u_1,u_2,\ldots,u_s\}$ to each clique in $\A$, together with a
minimum set of cliques $\C$ of $G$ that can cover all edges and triangles in $G$, 
which has size $\tetG$. In total, this cover has at most 
\[
s|\A| + |\C| \leq sM + \tetG = M'  
\]
cliques. Thus, if $\te(G) \leq M$ then $\tet(G') \leq M'$. Conversely, suppose that we 
have an edge-triangle clique cover $\D$ of $G'$ of size at most $M'$. Let $\D_i$ be the subset of cliques in $\D$ that contain the vertex $u_i$, for $1 \leq i \leq s$. As $u_i$ and $u_j$ are not adjacent, for $i \neq j$, $\D_i$ and $\D_j$ do not
have any common cliques. Hence,  
\[
\sum_{i = 1}^s |\D_i| \leq |\D| \leq M'.
\]
Therefore, if $i_{\min}$ is an index such that $|\D_{i_{\min}}| = \min_{1 \leq i \leq s} |\D_i|$, then 
\[
|\D_{i_{\min}}| \leq \left\lfloor \dfrac{\sum_{i = 1}^s |\D_i|}{s} \right\rfloor \leq \left\lfloor \dfrac{M'}{s}
\right\rfloor
=\left\lfloor \dfrac{sM + \tetG}{s} \right\rfloor = M,
\]
where the last equality holds because $s = 1 + \tetG$. 
Then, by removing $u_{i_{\min}}$ from all cliques in $\D_{i_{\min}}$, we obtain an edge clique
cover of $G$ of size at most $M$. The proof follows. 

\subsection{Proof of Upper Bound on The Edge-Triangle Cover Number} \label{app:erdos}

\begin{lemma} 
\label{lem:0}
Let $G$ be a graph on $n \geq 3$ vertices, and $(u,v,w)$ be a triangle in $G$. Then 
\begin{equation} 
\label{eq:tuvw}
\tuvw(G) \leq \lfloor\frac{n^2}{3}\rfloor-n+1,
\end{equation}
where $\tuvw(G)$ denotes the minimum number of cliques of $G$ that can cover all 
edges and triangles that contain at least one vertex among $u$, $v$, and $w$.  
\end{lemma}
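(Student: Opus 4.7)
The plan is to build an explicit clique cover of size at most $\lfloor n^2/3 \rfloor - n + 1$ for the collection $\mathcal{U}$ of edges and triangles of $G$ that touch $\{u,v,w\}$. I begin by including the single clique $\{u,v,w\}$, which covers the triangle $uvw$ together with the edges $uv$, $uw$, $vw$. Next, for each vertex $x \in N := V \setminus \{u,v,w\}$ I set $S_x := N(x) \cap \{u,v,w\}$ and include the clique $\{x\} \cup S_x$ whenever $S_x \neq \varnothing$. These at most $n-3$ cliques cover every edge $\{s,x\}$ with $s \in S_x$ and every triangle $\{s,t,x\}$ with $s,t \in S_x$. What is left in $\mathcal{U}$ are only the triangles of the form $\{s,x,y\}$ with $s \in \{u,v,w\}$, $x \neq y$ both in $N$, $xy \in E(G)$, and $s \in S_x \cap S_y$.

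The key observation for covering this remainder efficiently is that whenever $I \subseteq \{u,v,w\}$ and $C$ is a clique of $G$ contained in $\bigcap_{s \in I} N(s) \setminus \{u,v,w\}$, the union $I \cup C$ is itself a clique of $G$ and simultaneously covers the triangles $\{s,x,y\}$ for every $s \in I$ and every $x, y \in C$. A lifted clique at the ``$I = \{u,v,w\}$'' level thus saves a factor of three over a lifted clique at a singleton level. I therefore cover the remaining triangles in three passes, from the largest $|I|$ down: first take an edge clique cover of the subgraph of $G$ induced on $N \cap N(u) \cap N(v) \cap N(w)$ and lift each clique by $\{u,v,w\}$; then for each pair $\{s,t\} \subset \{u,v,w\}$ take an edge clique cover of the still-uncovered edges in $G[N \cap N(s) \cap N(t)]$ and lift by $\{s,t\}$; finally do the same inside $G[N \cap N(s)]$ for each singleton $\{s\}$. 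Each pass is controlled by the classical Erd\H{o}s--Goodman--P\'osa bound $\lfloor m^2/4 \rfloor$ applied to a graph on $m$ vertices.

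The bulk of the proof is to show that the total clique count over all passes does not exceed $\lfloor n^2/3 \rfloor - n + 1$. The extremal graph is the Tur\'an graph $T(n,3)$: its three parts align with the ``doubly-intersected'' sets $N(s) \cap N(t)$, so $N \cap N(u) \cap N(v) \cap N(w)$ is empty and every remaining triangle is a maximal clique that must be covered by itself. A direct count gives $n_1 n_2 n_3 - (n_1 - 1)(n_2 - 1)(n_3 - 1)$ cliques, which equals $3k^2 - 3k + 1 = \lfloor n^2/3 \rfloor - n + 1$ when $n = 3k$, matching the bound exactly.

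The main obstacle is thus the extremal step: one needs a weighted Tur\'an-type inequality showing that, when we shift an edge of $G[N]$ from a ``high'' layer (large $|I|$) to a ``low'' layer, the total cliques needed can only increase, so that the three-pass construction is maximized precisely at $T(n,3)$. The cleanest way I see to do this is to parametrize the sizes $|N(s) \cap N|$ and the intersection structure, apply the Erd\H{o}s--Goodman--P\'osa bound pass-by-pass, and then optimize the resulting quadratic expression under the constraint $|N| = n - 3$; the optimum occurs at equal part sizes. The $n \not\equiv 0 \pmod 3$ cases are handled by the routine floor adjustment that $T(n,3)$ uses parts of sizes $\lceil n/3 \rceil$ and $\lfloor n/3 \rfloor$, which loses at most the additive ``$-n+1$'' correction that already appears in the stated bound.
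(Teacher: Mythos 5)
Your construction takes a genuinely different route from the paper's: the paper proves Lemma~\ref{lem:0} by induction on $n$, peeling off the triangle $(u,v,w)$ and then a second triangle $(x,y,z)$ of $G_{n-3}$ chosen to maximize the number of cross-edges to $\{u,v,w\}$, and classifying the touched edges and triangles into types. Your direct three-pass construction is sound as a covering scheme (every lifted set $I\cup C$ is indeed a clique, and every residual triangle $\{s,x,y\}$ is caught at level $I=S_x\cap S_y$), and your identification of the extremal example is correct: in $T(n,3)$ every triangle meeting $\{u,v,w\}$ is a maximal clique, so $\tuvw(T(n,3))=n_1n_2n_3-(n_1-1)(n_2-1)(n_3-1)=3k^2-3k+1$ when $n=3k$, matching the bound.

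The gap is exactly at the step you flag as ``the main obstacle,'' and the method you propose for closing it does not work. After spending $1+(n-3)$ cliques on the first two steps, the remaining budget for the three passes is exactly $\lfloor n^2/3\rfloor-2n+3=\lfloor m^2/3\rfloor$ with $m=n-3$, and this budget has \emph{no slack} (the Tur\'an graph exhausts it). But summing the Erd\H{o}s--Goodman--P\'osa bound $\lfloor p^2/4\rfloor$ over the seven passes independently and then optimizing over the part sizes overshoots this budget. Concretely, let $N$ split into $A,B,C$ of size $m/3$ with adjacency types $S_x=\{u,v\},\{u,w\},\{v,w\}$ respectively. The singleton pass for $u$ operates on $N\cap N(u)=A\cup B$ of size $2m/3$, so the three singleton passes already contribute $3\lfloor(2m/3)^2/4\rfloor\approx m^2/3$ to the naive sum, while the three pair passes operate on $A$, $B$, $C$ and contribute a further $3\lfloor(m/3)^2/4\rfloor\approx m^2/12$, for a total of about $5m^2/12>\lfloor m^2/3\rfloor$. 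The over-count arises because the same vertices are charged at several levels: if $G[A]$ is dense enough to make the $\{u,v\}$-pass expensive, then cliques spanning two vertices of $A$ and two of $B$ make the $\{u\}$-pass cheaper, and vice versa (in $T(n,3)$ the pair passes cost nothing and the singleton passes cost everything). Your proof needs a joint, amortized accounting of the passes that captures this trade-off; a worst case taken independently for each pass, which is what ``apply EGP pass-by-pass and optimize the quadratic'' amounts to, provably exceeds the target. Until that joint analysis is supplied, the inequality \eqref{eq:tuvw} is not established by your argument.
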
 
\begin{proof}[Sketch] 
We prove this lemma by induction on $n$. The inequality~\eqref{eq:tuvw} obviously holds for $n\leq 5$. We now assume that
$n \geq 6$ and that \eqref{eq:tuvw} holds for all graphs of order $n - 3$. 
We need to prove that this inequality also holds for a graph $G$ of order $n$. 
Let $G_{n-3}$ be the subgraph of $G = (V,E)$ induced by the set of vertices $V \setminus \{u,v,w\}$. We consider the following two cases.\\ 

\nin\textbf{Case 1.} 
The graph $G_{n-3}$ has no triangles. 
In order to bound $\tuvw(G)$, we analyze the edges and triangles of $G$ that contain at least
one vertex from $\{u,v,w\}$. There are three types of such edges and triangles.	
Type~1 consists of the edges and triangles that only involve $u$, $v$, and $w$. Obviously, we can
cover all of these edges and triangles by using just one clique $(u,v,w)$. 
Type~2 consists of the edges and triangles that contain precisely one vertex in $G_{n-3}$. 
For each vertex $x$ of $G_{n-3}$, since $u$, $v$, and $w$ form a triangle, we can use at most one clique to cover all 
edges and triangles of Type~2 associated to $x$. Therefore, we can cover all edges and triangles of Type~2 by 
		at most $n-3$ cliques. Type~3 consists of the triangles that contain precisely one vertex 
		from $\{u,v,w\}$, and two vertices in $G_{n-3}$. Similarly, we can use at most one clique to cover all 
	triangles of Type~3 containing each edge $(x,y)$ of $G_{n-3}$. Since $G_{n-3}$ is triangle-free
	by assumption, according to Tur\'{a}n's theorem~\cite{Turan1941}, $G_{n-3}$ has
	at most $(n-3)^2/4$ edges. Therefore, we can cover all triangles of Type~3 by at most $(n-3)^2/4$ cliques.

By summing up the number of cliques to cover the edges and triangles of all three types we obtain
\begin{equation}
\label{eq:case1}
1 + (n-3) + \dfrac{(n-3)^2}{4} \leq \dfrac{n^2}{3}-n+1 \Longleftrightarrow 0 \leq (n-3)^2,
\end{equation}
which is always true. Thus, in this case, \eqref{eq:tuvw} holds. 
\begin{figure}[htb]
\centering
\includegraphics[scale=0.55]{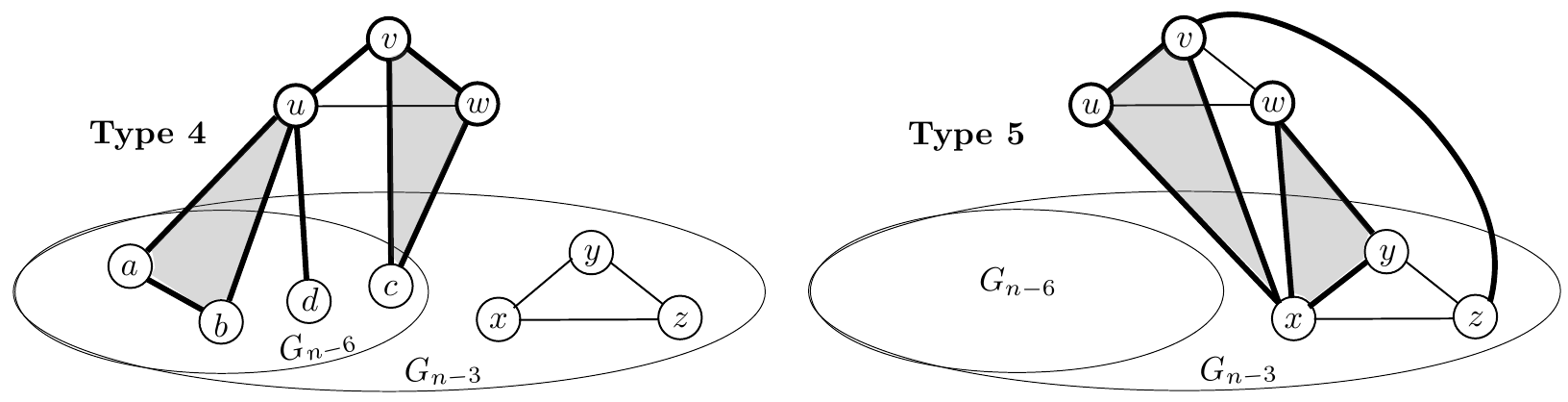}
\caption{(Case 2) Edges and triangles of Type~4, each of which contains at least one vertex from $\{u,v,w\}$, but no vertex from $\{x,y,z\}$, and of Type~5, each of which contains at least one vertex  from $\{u,v,w\}$, at least one vertex from $\{x,y,z\}$, and no vertex from $G_{n-6}$ (we ignore the triangle $(v,x,z)$ for clarity).}
\label{fig:types45}
\end{figure}

\nin\textbf{Case 2.} 
Let $(x,y,z)$ be a triangle of $G_{n-3}$ that forms the largest number of edges with $(u,v,w)$. In other words, we choose the triangle $(x,y,z)$ so that the number of edges $(p,q)$, where $p \in \{u,v,w\}$ and $q \in \{x,y,z\}$, is maximized among all triangles of $G_{n-3}$.  
Let $G_{n-6}$ be the subgraph of $G$ induced by the vertices $V \setminus \{u,v,w,x,y,z\}$. In order to bound $\tuvw(G)$, we analyze the edges and triangles of $G$ that contain at least one vertex from $\{u,v,w\}$. There are three types of such edges and triangles
(see Fig.~\ref{fig:types45} and Fig.~\ref{fig:type6}). 
Type~4 consists of the edges and triangles that contain at least one vertex from $\{u,v,w\}$, but no vertex from $\{x,y,z\}$. By the inductive hypothesis, we can cover all edges and triangles of Type~4 by at most 
	\begin{equation} 
	\label{eq:1}
	\dfrac{(n-3)^2}{3} - (n-3) + 1
	\end{equation} 
cliques. Type~5 consists of the edges and triangles that contain at least one vertex from $\{u,v,w\}$, at least one vertex from $\{x,y,z\}$, and no vertex from $G_{n-6}$. We can show that the edges and 
	triangles of Type~5 can be covered by at most \emph{six} cliques, without much difficulty.
\begin{figure}[htb]
\centering
\includegraphics[scale=0.55]{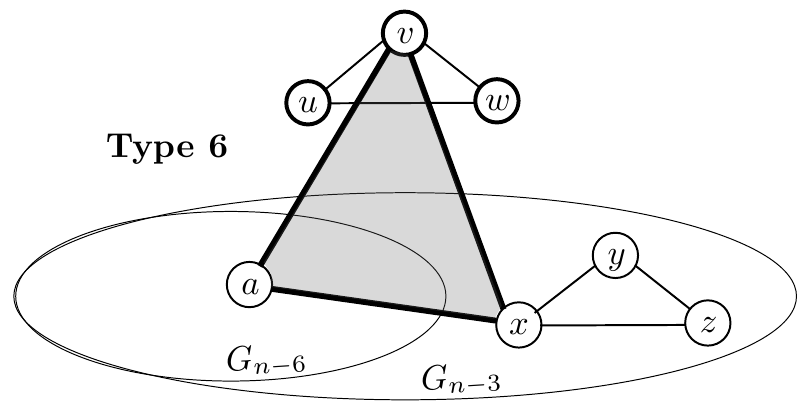}
\caption{(Case 2) Triangles of Type~6, each of which contains one vertex from $\{u,v,w\}$, one from $\{x,y,z\}$, and one from $G_{n-6}$.}
\label{fig:type6}
\end{figure}
Type~6 consists of the triangles that contain one vertex from $\{u,v,w\}$, one from $\{x,y,z\}$, and one from $G_{n-6}$. It can be shown that one can cover all triangles of Type~6 with at most $2(n-6)$ cliques. The key idea is to prove that we can use at most \emph{two} cliques to cover all triangles of this type that contain each fixed vertex $a$ of $G_{n-6}$. We omit the details.
Finally, the numbers of cliques used to cover all edges and triangles of Type~4, Type~5, and Type~6 sum up precisely to $n^2/3-n+1$.   
\end{proof} 

A proof similar to the one presented for Lemma~\ref{lem:0} may be used to prove Lemma~\ref{lem:12}.  

\begin{lemma} 
\label{lem:12}
Let $G$ be a graph on $n$ vertices, where $n \not\equiv 0 \pmod 3$, and $(u,v,w)$ is a triangle in $G$. Then 
\begin{equation} 
\label{eq:tuvw12}
\tuvw(G) \leq \frac{n^2+2}{3} - n,
\end{equation}
where $\tuvw(G)$ denotes the minimum number of cliques of $G$ that can cover all edges and triangles 
that contain at least one vertex among $u$, $v$, and $w$.  
\end{lemma}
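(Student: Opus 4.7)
The plan is short: observe that for $n \not\equiv 0 \pmod 3$ we have $n^2 \equiv 1 \pmod 3$, hence $\lfloor n^2/3 \rfloor = (n^2-1)/3$, and therefore
\[
\frac{n^2+2}{3} - n \;=\; \frac{n^2-1}{3} + 1 - n \;=\; \left\lfloor \frac{n^2}{3} \right\rfloor - n + 1.
\]
As integers, the bound claimed in Lemma~\ref{lem:12} is thus identical to the bound already established in Lemma~\ref{lem:0} when restricted to the residue classes $n \equiv 1, 2 \pmod 3$, so the conclusion follows immediately from Lemma~\ref{lem:0}. The point of recording Lemma~\ref{lem:12} separately is purely algebraic: the floor-free form $(n^2+2)/3-n$ is more convenient to telescope in the subsequent calculations that lead to Theorem~\ref{thm:upper_bound}.

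If one prefers a self-contained re-derivation that stays inside the clean algebraic form, I would induct on $n$ within each residue class separately. The small base cases $n \in \{4,5\}$ (and a few more for safety) are handled by explicit clique-cover constructions, mirroring Lemma~\ref{lem:0}. The inductive step then splits into the same two cases as in Lemma~\ref{lem:0}. In Case~1, where $G_{n-3}$ is triangle-free, Tur\'{a}n's theorem yields a cover of size $1 + (n-3) + \lfloor (n-3)^2/4 \rfloor$, and the required inequality with the new right-hand side reduces, after clearing denominators, to $(n-1)(n-5) \geq 0$, which holds for $n \geq 5$ (and $n=4$ is checked directly, with equality). In Case~2, I would pick a triangle $(x,y,z)$ of $G_{n-3}$ maximizing the number of edges to $\{u,v,w\}$ and split the relevant edges and triangles into Types~4, 5, 6 exactly as before. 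Since $n-3 \not\equiv 0 \pmod 3$, the inductive hypothesis of Lemma~\ref{lem:12} remains available and bounds Type~4 by $((n-3)^2+2)/3 - (n-3)$ cliques, while Types~5 and~6 contribute $6$ and $2(n-6)$ cliques respectively. The total then telescopes to
\[
\frac{(n-3)^2+2}{3} - (n-3) + 6 + 2(n-6) \;=\; \frac{n^2 - 3n + 2}{3} \;=\; \frac{n^2+2}{3} - n,
\]
matching the claim exactly.

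The main obstacle is inherited wholesale from the proof of Lemma~\ref{lem:0}: the Type~5 and, especially, the Type~6 clique-count arguments — in particular the assertion that for each fixed vertex $a \in V(G_{n-6})$ all Type~6 triangles through $a$ admit a cover by at most two cliques — rely on a delicate case analysis of the possible adjacencies among $\{u,v,w\}$, $\{x,y,z\}$, and $a$, leveraging the maximality of $(x,y,z)$ to rule out configurations that would otherwise force more than two cliques through $a$. These arguments carry over verbatim and constitute the technical heart of the proof; the contribution of Lemma~\ref{lem:12} over Lemma~\ref{lem:0} lies only in repackaging the bound in a cleaner algebraic form.
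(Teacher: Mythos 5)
Your proposal is correct, and your primary route is genuinely different from (and more economical than) the paper's. The paper disposes of Lemma~\ref{lem:12} by asserting that ``a proof similar to the one presented for Lemma~\ref{lem:0}'' works, i.e., it rrestages the whole induction with the new right-hand side. You instead observe that no new proof is needed: for $n \not\equiv 0 \pmod 3$ one has $n^2 \equiv 1 \pmod 3$, hence $\lfloor n^2/3 \rfloor - n + 1 = \frac{n^2-1}{3} - n + 1 = \frac{n^2+2}{3} - n$, so the bound of Lemma~\ref{lem:12} is \emph{identical as an integer} to that of Lemma~\ref{lem:0} on these residue classes (and the triangle hypothesis forces $n \geq 4 \geq 3$, so Lemma~\ref{lem:0} applies). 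This one-line reduction buys a proof with no new content, and you correctly identify the only purpose of the restatement: the floor-free form $\frac{n^2+2}{3}-n$ is exactly the increment needed to telescope $\frac{(n-4)^3}{27}+\frac{(n-4)^2}{9}$ up to $\frac{(n-1)^3}{27}+\frac{(n-1)^2}{9}$ in the induction for Theorem~\ref{thm:upper_bound}. Your fallback self-contained induction is the paper's intended argument, and your arithmetic is right: Case~1 reduces to $(n-1)(n-5)\geq 0$ with $n=4$ rescued by the floor, and Case~2 sums to $\frac{(n-3)^2+2}{3}-(n-3)+6+2(n-6)=\frac{n^2-3n+2}{3}=\frac{n^2+2}{3}-n$. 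Both routes inherit, rather than introduce, the unexpanded Type~5 and Type~6 covering claims from the sketch of Lemma~\ref{lem:0}, so this is not a gap attributable to your argument.
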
 

\begin{lemma}
\label{lem:base_case}
The inequality \eqref{eq:upper_bound} stated in Theorem~\ref{thm:upper_bound} holds for $n = 7, 8, 9$. 
\end{lemma}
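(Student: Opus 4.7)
The plan is a case analysis that mirrors the inductive step outlined in the proofs of Lemmas~\ref{lem:0} and \ref{lem:12}, with the small extremal residuals handled by an explicit construction of an edge-triangle clique cover.

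First, if $G$ is triangle-free, then every edge-triangle clique cover of $G$ reduces to an edge clique cover (any larger clique would contain a triangle), so Mantel's theorem gives $\tet(G) = |E(G)| \le \lfloor n^2/4 \rfloor$. For $n \in \{7,8,9\}$ this is $12, 16, 20$, each within the targets $12, 18, 27$ of \eqref{eq:upper_bound}.

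Next, suppose $G$ contains a triangle $(u,v,w)$. I would apply Lemma~\ref{lem:0} (for $n \equiv 0 \pmod 3$) or Lemma~\ref{lem:12} (otherwise) to cover all edges and triangles meeting $\{u,v,w\}$ using at most $10, 14, 19$ cliques for $n = 7, 8, 9$ respectively, and then recurse on the residual graph $G_{n-3}$, applying the same dichotomy. When the residual itself contains a triangle, the arithmetic closes cleanly: for $n = 9$ one obtains $19 + 7 + 1 = 27$ by using Lemma~\ref{lem:0} a second time on $G_6$ and then covering the final $K_3$ as a single clique, and analogous tight counts arise for $n = 7, 8$ in the triangle-rich branches.

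The delicate branch is when the residual $G_{n-3}$ is triangle-free and extremal (for instance $C_4$ when $n = 7$, or $K_{3,2}$/$K_{3,3}$ when $n = 8, 9$), because the naive sum of the two bounds then overshoots the target by $1$. I would dispose of these finitely many configurations by (a) invoking the sharpened Case~1 bound $\tuvw(G) \le 1 + (n-3) + \lfloor(n-3)^2/4\rfloor$ from the proof of Lemma~\ref{lem:0}, and (b) merging cliques whenever two vertices $x, y$ of $G_{n-3}$ are jointly adjacent to a pair $\{u', v'\} \subseteq \{u,v,w\}$: the $K_4$ on $\{u', v', x, y\}$ then simultaneously covers an edge of $G_{n-3}$, two Type~3 triangles of Lemma~\ref{lem:0}'s decomposition, and the edge $u'v'$, thereby saving one clique in the total.

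I expect step (b) to be the main obstacle. The extremal residuals are precisely the graphs that push $G$ toward a Tur\'{a}n graph $T(n,3)$, and in each such configuration one has to produce by hand a cover that attains the target size. Since the number of extremal residuals of orders $4, 5, 6$ is very small, a short case analysis (or a brief computer search) suffices, and the existence of a cover matching the target is consistent with Theorem~\ref{thm:Turan}, which pinpoints $T(n,3)$ as the unique extremal graph.
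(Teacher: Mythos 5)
Your overall strategy is the one the paper itself sketches: dispose of triangle-free $G$ by Tur\'{a}n's theorem, and otherwise peel off a triangle $(u,v,w)$ and split on whether $G_{n-3}$ contains a triangle. The triangle-free case is handled correctly, and the target values $12,18,27$ are right. The genuine gap is in the branches you claim ``close cleanly.'' The recursion as you state it adds the bound of Lemma~\ref{lem:0} or~\ref{lem:12} to a separately obtained bound for the residual, and this naive sum overshoots in more places than the extremal triangle-free residuals you single out. Concretely, for $n=8$ with a triangle in $G_5$ and an edge remaining in $G_2$, your scheme gives $14+4+1=19>18$; for $n=9$ with a triangle in $G_6$ but $G_3$ not itself a triangle, it gives $19+7+2=28>27$. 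Your ``overshoots by $1$'' accounting is also off: with the raw lemma bounds the triangle-free residuals $C_4$ and $K_{3,2}$ overshoot by $2$ (namely $10+4=14$ versus $12$, and $14+6=20$ versus $18$), whereas with the sharpened Case~1 bound of your step (a) the cases $n=8,9$ do not overshoot at all ($12+6=18$ and $16+9=25$) and only $n=7$ remains short by one.

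The repair is the observation already implicit in your step (b), but it must be applied uniformly rather than only to the extremal triangle-free residuals: the quantities $\tuvw(G)$ and $\tet(G_{n-3})$ cannot be bounded and summed independently, because the Type~3 cliques in the proof of Lemma~\ref{lem:0} --- one per edge $(x,y)$ of the residual --- can be chosen to cover the edge $(x,y)$ itself together with all triangles through it, so each residual edge costs one clique in total, not two. With that single accounting convention one gets, for instance, $\tet(G_5)\le 1+2+e(G_2)\le 4$ and $\tet(G_6)\le 1+3+e(G_3)\le 6$ when the final residual is triangle-free, and every branch for $n=7,8,9$ then lands at or below $12$, $18$, $27$ respectively. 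So the approach is salvageable and is essentially the paper's (which omits all of these details), but as written your case analysis would fail in the triangle-rich sub-branches, and the list of delicate configurations you propose to dispose of by hand is not the right list.
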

\begin{proof}[Sketch]
For $n = 7,8,9$, we can apply the strategy used in the proof of Lemma~\ref{lem:0}, by taking out a triangle
$(u,v,w)$ of $G$, if any, and then considering two cases, depending on whether $G_{n-3} = G \setminus \{u,v,w\}$
contains a triangle $(x,y,z)$ or not. In both cases, we can show that \eqref{eq:upper_bound} holds. 
Note that if $G$ does not contain any triangles, then by \turan's theorem~\cite{Turan1941}, all edges of $G$ can
be covered by at most $\lfloor n^2/4\rfloor \leq \lfloor n/3 \rfloor \lfloor (n+1)/3 \rfloor \lfloor (n+2)/3 \rfloor$ cliques,
which are the edges themselves, for $n \geq 7$. We omit the remaining details.   
\end{proof}

\begin{proof}[Proof of Theorem~\ref{thm:upper_bound}]
We also prove this theorem by induction on $n$. 
The base case follows from Lemma~\ref{lem:base_case}. 

\nin\textbf{Induction step.}
 
Suppose that $n \geq 10$ and that the statement \eqref{eq:upper_bound} of the theorem holds for all graphs on $n-3$ vertices. 
We aim to prove that \eqref{eq:upper_bound} also holds for any graph $G = (V,E)$ on $n$ vertices.
If $G$ has no triangles then by \turan's theorem~\cite{Turan1941}, all edges of $G$ can be covered by at most 
$\lfloor n^2/4\rfloor \leq \lfloor n/3 \rfloor \lfloor (n+1)/3 \rfloor \lfloor (n+2)/3 \rfloor$ cliques (edges), for $n \geq 10$, 
and hence Theorem~\ref{thm:upper_bound} holds trivially. We now assume that there exists some triangle 
$(u,v,w)$ in $G$. Let $G_{n-3}$ be the subgraph of $G$ induced by the vertex set $V \setminus \{u,v,w\}$.
If $n \equiv 0 \pmod 3$, then by our inductive hypothesis, all edges and triangles in $G_{n-3}$ can be covered
by using at most $\frac{(n-3)^3}{27}$ cliques. Moreover, by Lemma~\ref{lem:0}, all edges and triangles in $G$ 
that contain at least one vertex from $\{u,v,w\}$ can be covered by at most $\frac{n^2}{3}-n+1$ cliques. 
	Thus, all edges and triangles in $G$ can be covered by using at most
	\[
	\tet(G_{n-3}) + \tuvw(G) \leq \dfrac{(n-3)^3}{27} + \Big(\dfrac{n^2}{3}-n+1 \Big) 
	= \dfrac{n^3}{27}
	\]
	cliques. Hence, Equation~\eqref{eq:upper_bound} holds for $G$ as well. The cases $n \equiv 1,2 \pmod 3$ can be handled similarly.
\end{proof}

\subsection{Proof of the Upper Bound on the Edge-Triangle Clique Cover Number for Complements of Sparse Graphs}
\label{app:alon}

Let $M = \lceil 3e^3(d+1)^3\log_e n \rceil$. Each set $C'_k$, $k = 1,2,\ldots, M$, is created independently
by including each vertex $v$ with a probability of $1/(d+1)$. Then for each $m$, let $C_m$ be obtained from
$C'_m$ by removing those vertices that have some non-neighbors in $C'_m$. Obviously $C_m$ is a clique of $G$.
We aim to show that the expected number of edges and triangles that are not contained in any clique $C_m$, $k = 1,2,\ldots,M$, 
is smaller than one, which implies that there exists an ETCC of size $M$. 

For each $m$, each triangle $(u,v,w)$ of $G$ is covered by $C_m$ if all three vertices are included in $C'_m$ and 
none of their non-neighbors are chosen. Therefore, the probability that $(u,v,w)$ is covered by $C_m$
is at least 
\[
\frac{1}{(d+1)^3}\Big(1-\frac{1}{d+1}\Big)^{3d} \geq \frac{1}{e^3(d+1)^3},
\]
where the inequality follows from the inequality $(1-\frac{1}{x})^{x-1} \geq \frac{1}{e}$, 
where $x = d + 1 \geq 2$. 
Therefore, the probability that $(u,v,w)$ is not covered in any $C_m$'s is at most
\[
\Big( 1 - \dfrac{1}{e^3(d+1)^3} \Big)^M \leq \exp(-\dfrac{M}{e^3(d+1)^3}) 
\]
\[
\;\;\;\;\;\;\;\;\;\;\;\;\;\;\;\;\;\;\;\;\;\;\;\;\;\;\;\;\;\;\;\;\;\;\;< \exp(-3\log_e n) = \dfrac{1}{n^3}, 
\]
where the first inequality is from the inequality $(1-\frac{1}{x})^y \leq \exp(-\frac{y}{x})$, for all $x > 1$ and $y > 0$, 
and the second one is because $M \geq 3e^3(d+1)^3\log_e n$. Hence, the expected number of triangles that are not covered by
any of the $C_m$'s is at most
\begin{equation}
\label{eq:Alon1}
\binom{n}{3}\dfrac{1}{n^3} < \dfrac{1}{6}.
\end{equation}
Since $M > 2e^2(d+1)^2$, the same computation shows that the expected number of edges that are not covered by
any of the $C_m$'s is at most
\begin{equation}
\label{eq:Alon2}
\binom{n}{2}\dfrac{1}{n^2} < \dfrac{1}{2}.
\end{equation}
From \eqref{eq:Alon1} and \eqref{eq:Alon2}, by the additivity of expectation, we deduce that the expected number of edges and triangles that are not cover by the cliques $C_m$'s, $k = 1,2,\ldots,M$, is smaller than one. 

\bibliographystyle{IEEETran}
\bibliography{TriangleCliqueCover,cluster,cluster1}
\end{document}